\DeclareMathOperator{\suc}{Success}
\DeclareMathOperator{\unif}{Unif}
\theoremstyle{plain} 
\newtheorem{theorem}{Theorem}[section]
\newtheorem{lemma}[theorem]{Lemma}
\newtheorem{corollary}[theorem]{Corollary}
\newtheorem{proposition}[theorem]{Proposition}
\theoremstyle{definition}
\newcommand*{\bN}{\mathbb{N}}
\newcommand*{\bid}{\mathbf{1}}
\newcommand*{\cD}{\mathcal{D}}
\newcommand*{\cE}{\mathcal{E}}
\newcommand*{\cI}{\mathcal{I}}
\newcommand*{\dI}{\mathbb{I}}
\newcommand*{\cN}{\mathcal{N}}
\newcommand*{\cM}{\mathcal{M}}
\newcommand*{\cS}{\mathcal{S}}
\newcommand*{\fS}{\mathfrak{S}}
\newcommand*{\cW}{\mathcal{W}}
\newcommand*{\cY}{\mathcal{Y}}
\newcommand*{\cX}{\mathcal{X}}
\newcommand*{\mge}{\succcurlyeq}
\newcommand*{\mle}{\preccurlyeq}
\newcommand{\pin}{p_{\rm{initial}}}
\newcommand{\pt}{p_{\rm{target}}}
\newcommand*{\pr}[1]{\mathbb{P}\left[#1 \right]}
\newcommand*{\exs}[2]{\mathbb{E}_{#1}\left[#2 \right]}
\DeclareMathOperator{\TV}{TV}
\newcommand*{\eps}{\varepsilon}
\newcommand*{\id}{\mathrm{id}}
\newcommand*{\tr}[1]{\mathrm{Tr}\left[#1\right]}
\newcommand*{\ptr}[2]{\mathrm{Tr}_{#1}\left[#2\right]}
\newcommand*{\ket}[1]{| #1 \rangle}
\newcommand*{\bra}[1]{\langle #1 |}
\newcommand*{\spr}[2]{\langle #1 | #2 \rangle}
\newcommand*{\proj}[1]{|#1\rangle\!\langle #1|}
\newcommand{\ceil}[1]	{\left\lceil #1 \right\rceil}
\title{Optimality of meta-converse for channel simulation}
\author[1]{Aadil Oufkir}
\author[2]{Omar Fawzi}
\author[1]{Mario Berta} 
\affil[1]{\small{Institute for Quantum Information,
  RWTH Aachen University,
  Aachen, Germany}}
\affil[2]{\small{Univ Lyon, Inria, ENS Lyon, UCBL, LIP, Lyon, France}
}
\begin{document}

\maketitle

	\begin{abstract}
		We study the effect of shared non-signaling correlations for the problem of simulating a channel using noiseless communication in the one-shot setting. For classical channels, we show how to round any non-signaling-assisted simulation strategy\,---\,which corresponds to the natural linear programming meta-converse for channel simulation\,---\,to a strategy that only uses shared randomness. For quantum channels, we round any non-signaling-assisted simulation strategy to a strategy that only uses shared entanglement. Our main result is for classical and classical-quantum channels, for which we employ ideas from approximation algorithms to give a guarantee on the ratio of success probabilities of at least $(1-\mathrm{e}^{-1})$. We further show this ratio to be optimal for the purely classical case. It can be improved to $(1-t^{-1})$ using $O(\ln \ln(t))$ additional bits of communication.
	\end{abstract}


\section{Introduction}

\subsection{Motivation}

Channel simulation corresponds to the reverse task of channel coding and is about simulating a noisy channel using a noiseless one (see, e.g.,~\cite{Steinberg1994May,bennett2002entanglement,berta2013quantum,sudan2019communication,cao2022channel,yu2022common}). Similarly as for channel coding, in the asymptotic, independent and identically distributed (i.i.d.) limit\,---\,and when shared-randomness assistance is available\,---\,the channel's capacity characterizes the minimal communication rate needed for channel simulation with vanishing error~\cite{bennett2002entanglement,berta2011quantum,bennett2014quantum}.
	
Such asymptotic characterizations in Shannon theory~\cite{shannon1948mathematical} have since been much refined and for the most general (structureless) one-shot setting, e.g., the work~\cite{barman2017algorithmic} on classical channel coding gives a simple and efficient approximation algorithm to return a code achieving a $(1-\mathrm{e}^{-1})$-approximation of the maximum success probability that can be attained. This algorithm is based on a natural linear programming relaxation of the problem and can be understood as the sender and receiver sharing non-signaling correlations~\cite{matthews2012linear}\,---\,which exactly corresponds to the well-known PPV meta-converse for channel coding~\cite{polyanskiy2010channel,Hayashi09}.\footnote{The PPV meta-converse is often stated in terms of the code size $M$ for a fixed error probability $\eps$, while in the following we are interested in maximizing the success probability $1-\eps$ for a fixed given code size $M$.} The approximation ratio $(1-\mathrm{e}^{-1})$ is also tight and it is NP-hard to achieve a strictly better ratio~\cite{barman2017algorithmic}. Further, some extensions about computing the maximum success probability of (deterministic) broadcast channels~\cite{fawzi2023broadcast}, multiple access channels~\cite{fawzi2023multiple}, as well as classical-quantum (CQ) channels~\cite{fawzi2019approximation} are available. However, for quantum channel coding, finding efficient and provably optimal approximation algorithms for the optimal success probability remains open.

Are similar one-shot refinements possible for the reverse task of channel simulation? In this work, we consider in the one-shot setting the problem of simulating classical channels with shared-randomness assistance, as well as the problem of simulating quantum channels with shared-entanglement assistance. In what might be called the algorithmic point of view on Shannon theory~\cite{polyanskiy2010channel,matthews2012linear,barman2017algorithmic,fawzi2019approximation,fawzi2023multiple,fawzi2023broadcast}\,---\,and similar as in aforementioned results on channel coding\,---\,one has a complete description of the channel to simulate and the goal is then to find good encoding-decoding schemes in order to maximize the success probability for a fixed communication size. More specifically, we are interested in designing efficient approximation algorithms to obtain near optimal codes for channel simulation.


\subsection{Overview of results}

Our starting point is the natural linear programming meta-converse relaxation for channel simulation with shared-randomness assistance, which is known to correspond to non-signaling correlations assistance~\cite{cubitt2011zero,cao2022channel}. This relaxation then also gives the correct channel capacity formula in the asymptotic i.i.d.\ limit, as well as higher-order refinements thereof~\cite{cao2022channel}. Here, we show that this meta-converse for channel simulation is equally useful in the one-shot setting:

\begin{itemize}
	\item Following \cite{cubitt2011zero,cao2022channel}, we allow non-signaling correlations assistance between the sender and the receiver (cf.~\cite{leung_power_2015,Duan2015Dec,Wang2017Aug,fang2019quantum}). The optimal success probability for the simulation then becomes a linear program (LP) analogous to the meta-converse for channel coding~\cite{polyanskiy2010channel,matthews2012linear}.

	\item This computationally efficient LP meta-converse provides an upper bound on the success probability, as non-signaling (NS) correlations include shared-randomness (SR) assistance. Our main result is to show that this bound $\suc^{\rm{NS}}$ gives an $(1-\mathrm{e}^{-1})$-approximation of the maximum success probability $\suc^{\rm{SR}}$ as
    \[    \suc^{\rm{NS}} \ge \suc^{\rm{SR}}\ge \Big(1-\frac{1}{\mathrm{e}}\Big) \suc^{\rm{NS}}. \] 
    To derive this, we round the solution of the non-signaling program to a shared-randomness strategy for simulating a classical channel with the same communication size. After reducing the non-signaling program using symmetries~\cite{cubitt2011zero}, we use rejection sampling~\cite{neumann1951various} to simulate an optimal non-signaling channel using an optimal probability distribution that appears in the non-signaling program. Even further, we prove a meta inequality between the shared-randomness and non-signaling channels, allowing for an approximation flexible in distance measures.  Rejection sampling was used, in a different fashion, in communication~\cite{Harsha10}.

	\item We show that the approximation ratio $(1-\mathrm{e}^{-1})$ is tight. This follows by considering a type of universal channels~\cite{cubitt2011zero} that can be perfectly simulated using non-signaling strategies (with a sufficient communication power) while  shared-randomness strategies (with the same amount of  communication) can only achieve a success probability  bounded by $(1-\mathrm{e}^{-1})$.

    \item If an additional $\ln \ln(t)$ bits of communication is allowed, then we obtain a better approximation ratio $(1-t^{-1})$. In particular, this allows us to prove that the non-signaling simulation capacity of a classical channel is exactly the same as with only shared-randomness assistance~\cite{bennett2002entanglement}.

    \item We generalize our classical approximation results to the classical-quantum setting, where we compare shared-entanglement with non-signaling strategies. The coherent rejection sampling from \cite{anshu2017quantum} known as convex split technique does not seem to be good enough to achieve the multiplicative approximation factor $(1-\mathrm{e}^{-1})$. So, instead we proceed by a more refined quantum form of rejection sampling technique as described in~\cite{cao2023channel} for classical-quantum channels.

   \item Finally, we prove weaker approximation results for fully quantum channels using the coherent rejection sampling \cite{anshu2017quantum} along with the minimax theorem proven by \cite{Cao2024Mar}. Erratum: We note that in the preliminary version of our work \cite{berta2024optimality} presented at the 2024 IEEE International Symposium on Information Theory, we claimed stronger rounding results for fully quantum channels in \cite[Section III]{berta2024optimality}, similar to the bounds we have for the classical and classical-quantum case. However, in \cite{berta2024optimality} we did not use a proper fully quantum and worst case notion of simulating fully quantum channels. Namely, the proof implicitly assumes that the reference system is classical as well as the input to be fixed. Consequently, we do know if the claimed \cite[Proposition 2 \& Corollary 2.1]{berta2024optimality} hold and they remain open questions. As such, we would herewith like to retract \cite[Section III]{berta2024optimality}, and replace it with Section \ref{sec:quantum} in our full version here. Note that the results still hold for the classical-quantum case, as mentioned above.
\end{itemize}


\subsection{Notation} 
For a positive integer $n\in \bN$, we denote by $[n]$ the set of integers between $1$ and $n$.   The total variation $(\TV)$ distance between two probability distributions $p$ and $q$ on $[n]$ is 
	\[
		\|p-q\|_{\TV}=\frac{1}{2}\sum_{i=1}^n |p_i-q_i|.
	\]
A classical channel $W_{Y|X}$ with input alphabet $\cX$ and output alphabet $\cY$ is a collection $\{W_{Y|X}(\cdot|x)\}_{x\in \cX}$ of probability distributions on $\cY$.  The worst case $\TV$ distance between two classical channels $W_{Y|X}$ and $T_{Y|X}$ is defined as
\[\left\|W_{Y|X}-T_{Y|X}\right\|_{\TV} = \sup_{x\in \cX } \left\|W_{Y|X}(\cdot|x)-T_{Y|X}(\cdot|x)\right\|_{\TV}.\]
We use standard quantum information theory notation. 
Hilbert spaces are denoted $A, B,...$ and will be confused with the label of the corresponding systems. We let $|A|$ be the dimension of the Hilbert space $A$. Let $\mathrm{L}(A)$ denote the set of linear maps from $A$ to itself. A quantum state on $A$ is defined as
\[
\rho\in \mathrm{L}(A):\; \rho \mge0 \;\text{and}\; \tr{\rho}=1,
\]
where $\rho \mge 0$ means that $\rho$ is positive semidefinite. The set of quantum states on $A$ is denoted by $\mathrm{D}(A)$. 
For an integer $N\ge 2$, we denote the $N$-partite composite system by $A_1A_2\cdots A_N = A_1\otimes A_2\otimes \dots \otimes A_N$.
\\The partial trace $\ptr{B}{.}$ is a quantum channel from $AB$ to $A$ defined as 
\[
\forall \rho \in \mathrm{L}(A)\otimes \mathrm{L}(B) :	\ptr{B}{\rho}= \sum_{i=1}^{|B|} \left(\dI_{A} \otimes \bra{i}_B \right) \,\rho\, \left(\dI_{A} \otimes \ket{i}_B \right). 
\]
\\Quantum channels are linear maps $\cN: \mathrm{L}(A)\rightarrow\mathrm{L}(B)$ that can be written in the form 
\[
	\forall \rho \in \mathrm{L}(A) : \cN(\rho)= \sum_{x\in \cX} K_x \rho K_x^\dagger 
\]
where the Kraus operators $\{K_x\}_{x\in \cX}$ are linear maps from $A$ to $B$ and satisfy $\sum_{x\in \cX}  K_x^\dagger K_x= \dI_{A}$.
\\For a quantum channel $\cN_{A\rightarrow B}$, we define the Choi matrix \[J_{\cN}=\sum_{i,j=1}^{|A|} \ket{i}\bra{j}_{A'}\otimes \cN_{A\rightarrow B}(\ket{i}\bra{j}_A) = (\id\otimes \cN_{A\rightarrow B})(\proj{w}_{A'A}) \quad \text{where}\quad  \ket{w}=\sum_{i=1}^{|A|}\ket{i}_{A'}\otimes \ket{i}_{A}. \] 
A quantum channel is completely positive and trace preserving:
\begin{itemize}
    \item \textbf{completely positive:} for all $n\in \bN$ and $\rho \mge 0$, $[\id_n\otimes \cN](\rho)\mge 0$ or equivalently $J_\cN\mge 0$,
    \item \textbf{trace preserving:} for all $\rho$, $\tr{\cN(\rho)} = \tr{\rho}$ or equivalently $\ptr{B}{J_\cN} = \dI_{A'}$. 
\end{itemize}
A quantum channel $\cN$ with classical input system is called a classical-quantum (CQ) channel, and is described by a set of quantum states $\{\cN(x)\}_{x \in \cX} \in \left(\mathrm{D}(B)\right)^{\cX}$. A measurement is described by a POVM (positive operator-valued measure) $\{\Lambda_x\}_{x \in \cX} \in \left(\mathrm{L}(A)\right)^{\cX}$ where the measurement operators satisfy $\Lambda_x \mge 0$ and $\sum_{x\in \cX} \Lambda_x= \dI_{A}$. After performing the measurement on a quantum state $\rho\in \mathrm{D}(A)$ we observe the outcome $x\in \cX$ with probability $\tr{\Lambda_x\rho}$.
\\The one norm of an operator $\xi \in \mathrm{L}(A)$ is defined as $\|\xi\|_1= \tr{|\xi|}$ where $|\xi|=\sqrt{\xi \xi^\dagger}$. 
The trace norm between two quantum states $\rho\in \mathrm{D}(A)$ and $\sigma\in \mathrm{D}(A)$ is defined as \[\|\rho-\sigma\|_{\operatorname{Tr}}=\tfrac{1}{2}\|\rho-\sigma\|_1.\] 
The diamond distance between two quantum channels $\cN_{A\rightarrow B}$ and $\cM_{A\rightarrow B}$ is defined as 
 \[\|\cN-\cM\|_{\diamond}=\sup_{\sigma\in \mathrm{D}(RA)}\|(\id_R\otimes\cN_{A\rightarrow B})(\sigma_{RA})-(\id_R\otimes\cM_{A\rightarrow B})(\sigma_{RA})\|_{\operatorname{Tr}}.\]
The diamond distance between two CQ channels $\{\cN(x)\}_{x \in \cX}, \{\cM(x)\}_{x \in \cX} \in \left(\mathrm{D}(B)\right)^{\cX}$  
satisfies
 \[\|\cN-\cM\|_{\diamond}=\sup_{x\in \cX}\|\cN(x)-\cM(x)\|_{\operatorname{Tr}}.\]
The fidelity between two quantum states 
$\rho\in \mathrm{D}(A)$ and $\sigma\in \mathrm{D}(A)$ is defined as \[F(\rho,\sigma) = \|\sqrt{\rho}\sqrt{\sigma}\|_1^2.\] 
The purified distance  between two quantum states 
$\rho\in \mathrm{D}(A)$ and $\sigma\in \mathrm{D}(A)$ is defined as \[P(\rho,\sigma) = \sqrt{1-F(\rho, \sigma)}.\]
The purified distance between two quantum channels $\cN_{A\rightarrow B}$ and $\cM_{A\rightarrow B}$ is defined as 
 \[P(\cN,\cM)=\sup_{\sigma\in \mathrm{D}(RA)}P\left((\id_R\otimes\cN_{A\rightarrow B})(\sigma_{RA}),(\id_R\otimes\cM_{A\rightarrow B})(\sigma_{RA})\right).\]
The relative entropy between two quantum states 
$\rho\in \mathrm{D}(A)$ and $\sigma\in \mathrm{D}(A)$ such that $\mathrm{supp}(\rho)\subset \mathrm{supp}(\sigma)$ is defined as \[D(\rho,\sigma) = \tr{\rho(\log(\rho)-\log(\sigma))}.\] 
Finally, $A\mge B$ stands for $A-B$  positive semi-definite.
	

\section{Simulation of classical channels}\label{sec:classical}

\subsection{Shared-randomness assistance}

Given a classical channel $W_{Y|X}$ of input alphabet $\cX$ and output alphabet $\cY$ and an integer $M$, our goal is to simulate the channel $W$ using a classical communication of at most $M$ distinct messages and with an error probability as small as possible. More formally, we can describe a size $M$ $\eps$-simulation code  for the channel  $W_{Y|X}$ by a triple $(\{\cE_s \}_{s\in \cS}, \{\cD_s \}_{s\in \cS}, \{p_S(s)\}_{s\in \cS} )$ such that the synthesized channel
\begin{align*}
	\widetilde{W}_{Y|X}(y|x)= \sum_{s\in \cS} p_S(s) \sum_{i=1}^M \cE_s(i|x)\cD_s(y|i)
\end{align*}
is $\eps$ close to the actual channel $W_{Y|X}$ in the worst case total variation distance. Here, it is allowed to use a shared random variable $S$ on an arbitrary discrete set $\cS$. This choice is motivated by the fact that strategies assisted by shared-randomness (SR) are proven to achieve the optimal channel simulation capacity \cite{cao2022channel}, whereas strategies without assistance require a simulation rate strictly greater than the capacity~\cite{bennett2002entanglement,bennett2014quantum}. In the following, we describe a possible approximation algorithm given by the meta-converse.


\subsection{Non-signaling assistance in the classical setting}\label{sec:cl-ns}
 
SR strategies are special case of the so called non-signaling (NS) strategies. In this latter, the joint encoder-decoder map $N_{IY|XJ}$ satisfies
\begin{align*}
	N_{IY|XJ}(i,y|x,j)&\ge 0  && \forall i,y,x,j,
	\\ \sum_{i,y}    N_{IY|XJ}(i,y|x,j)&=1  && \forall x,j,
	\\    \sum_{i}    N_{IY|XJ}(i,y|x,j)&=N_{Y|J}(y|j)  && \forall x, 
	\\    \sum_{y}    N_{IY|XJ}(i,y|x,j)&=N_{I|X}(i|x)  && \forall j.
	\end{align*}
We denote the set of non-signaling maps by $\cN\cS(IY|XJ)$. Non-signaling strategies prove to be useful for simplifying the computation of the maximal success probability which is obtained by solving the following program
\begin{align}\label{classical-ns-program}
	\suc^{\rm{NS}}(W,M) = \max_{N} &\;1-\sup_{x\in \cX} \left\| \widetilde{W}(\cdot|x)- W(\cdot|x)\right\|_{\TV} \notag
	\\  \rm{s.t.}\;\;& N\in \cN\cS(IY|XJ),
   \\ &  \widetilde{W}_{}(y|x)= \hspace{-0.3em}\sum_{i=1}^M N_{}(i,y|x,i). \notag
\end{align}
So, when we relax the constraints of the encoder-decoder to be non-signaling, we obtain a linear program. After a symmetry-based reduction, the program~\eqref{classical-ns-program} becomes~\cite{fang2019quantum,cao2022channel} (see Lemma~\ref{app:ns-program-cc} for a proof)
\begin{align}\label{classical-ns-program-sym}
	\suc^{\rm{NS}}(W,M) = \max_{\widetilde{W},\, \zeta} \;\;& 1-\sup_{x\in \cX} \left\| \widetilde{W}(\cdot|x)- W(\cdot|x)\right\|_{\TV} \notag 
	\\ \rm{s.t.}\;\;&  \sum_{y} \widetilde{W}_{}(y|x)= 1 \quad \forall x, \notag
	\\&   \widetilde{W}_{}(y|x)\ge 0 \quad \quad \;\;\, \forall x,y, 
	\\&  \widetilde{W}_{}(y|x) \le \zeta(y)  \quad\;\, \forall x,y, \notag
	\\&  \sum_{y} \zeta(y)=M.  \notag
\end{align}
Observe that a strategy assisted by shared-randomness $N(iy|xj)= \sum_{s\in \cS} p_S(s) \sum_{i=1}^M \cE_s(i|x)\cD_s(y|j)$ is non-signaling, so the non-signaling (NS) simulation success probability is greater than the shared-randomness (SR) simulation success probability
\begin{align*}
	\suc^{\rm{NS}}(W_{},M) \ge  \suc^{\rm{SR}}(W_{},M).
\end{align*}
These values might differ in general, even when comparing them with entanglement-assisted strategies (see Appendix~\ref{app: one shot SR vs EA}). Since non-signaling strategies are computationally accessible and shared-randomness strategies are of interest, a natural question arises:

\begin{center}
    \emph{How large is the gap between the success probabilities \\ of shared-randomness and non-signaling strategies?} 
\end{center}


\subsection{Rounding}

In the following, we round the solution of the non-signaling program \eqref{classical-ns-program-sym} to a strategy that requires only shared-randomness. This allows us to prove a meta inequality between the channel using non-signaling (NS) resources and the constructed channel using only shared randomness (SR).

\begin{proposition}\label{prop:meta-inequality-classical}
	Let $M,M'\in \bN$ and $\widetilde{W}^{\rm{NS}}_{}$ be a feasible solution of the program \eqref{classical-ns-program-sym} of size $M$. Then, there exists a shared-randomness assisted strategy $\widetilde{W}^{\rm{SR}}_{}$ of size $M'$ such that
	\begin{align*}
		\widetilde{W}^{\rm{SR}}_{}(y|x)\ge \left[1-\left(1-\frac{1}{M}\right)^{M'}\right]\widetilde{W}^{\rm{NS}}_{}(y|x)\quad\forall x,y.
	\end{align*}
\end{proposition}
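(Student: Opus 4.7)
The plan is to build the SR strategy by rejection sampling, using the auxiliary variable $\zeta$ from the reduced program \eqref{classical-ns-program-sym} as the envelope distribution. Since feasibility forces $\sum_y \zeta(y) = M$ and $0 \le \widetilde{W}^{\rm{NS}}(y|x) \le \zeta(y)$, the normalization $q(y) := \zeta(y)/M$ is a probability distribution on $\cY$ with the property that the ratios $\widetilde{W}^{\rm{NS}}(y|x)/(M q(y))$ all lie in $[0,1]$, so they can serve as acceptance probabilities for a rejection sampling procedure with proposal $q$.

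Next I would spell out the protocol. The shared randomness consists of $M'$ i.i.d.\ samples $Y_1,\dots,Y_{M'}$ drawn from $q$. On input $x$, Alice scans her samples in order, independently ``accepting'' each index $i$ with probability $\widetilde{W}^{\rm{NS}}(Y_i|x)/(M q(Y_i))$, and sends the smallest accepted index $I\in [M']$ to Bob (defaulting to $I=1$ if no index is accepted). This uses at most $M'$ distinct messages. Bob outputs $Y_I$, which is a valid SR decoder since $Y_I$ is a deterministic function of the shared randomness and the transmitted index.

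To analyze the induced channel, I would first note that the per-sample marginal rejection probability is $\sum_{y'} q(y')\bigl(1-\widetilde{W}^{\rm{NS}}(y'|x)/(M q(y'))\bigr) = 1 - 1/M$, independently of $x$. Therefore, for each $i\in [M']$ and each $y$, the probability that index $i$ is the first to be accepted and that $Y_i = y$ equals $(1-1/M)^{i-1}\,q(y)\cdot \widetilde{W}^{\rm{NS}}(y|x)/(M q(y)) = (1-1/M)^{i-1}\widetilde{W}^{\rm{NS}}(y|x)/M$. Summing this geometric series over $i\in [M']$ lower bounds $\widetilde{W}^{\rm{SR}}(y|x)$ by $\bigl[1-(1-1/M)^{M'}\bigr]\widetilde{W}^{\rm{NS}}(y|x)$, which is the claim; the possible additional contribution from the ``default'' branch (when no sample is accepted) only helps and can be discarded.

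I do not anticipate a hard obstacle. The main conceptual step is simply recognizing that the variable $\zeta$ of the symmetry-reduced NS program plays the role of a (rescaled) rejection-sampling envelope; once that is noticed, the construction of the protocol and the computation of the output distribution are elementary, and the asymmetric dependence on $M$ (the NS size) and $M'$ (the SR size) appears naturally from the mismatch between the envelope total mass $M$ and the number of samples $M'$.
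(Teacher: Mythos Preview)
Your proposal is correct and follows essentially the same rejection-sampling construction as the paper: shared samples $Y_1,\dots,Y_{M'}$ from $\zeta/M$, acceptance probabilities $\widetilde{W}^{\rm{NS}}(Y_i|x)/\zeta(Y_i)$, transmit the first accepted index, and decode $Y_I$. The only cosmetic differences are your default index ($1$ versus the paper's $M'$) and that you lower-bound the output law directly via the geometric sum over the accepted branch, whereas the paper first computes the exact output distribution and then drops the nonnegative default term; both routes yield the same inequality.
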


To construct the strategy $\widetilde{W}^{\rm{SR}}_{}$ with shared-randomness, we use a standard tool from statistics \cite{neumann1951various,Robert} that has been applied previously to channel simulation \cite{cao2022channel,Cao22,cao2023channel} namely the \emph{rejection sampling technique} \cite{Jain2003Jun,Harsha10}. However, the one shot bound we obtain in this work (Proposition \ref{prop:meta-inequality-classical}) is conceptually different and highly flexible.

\begin{proof}[Proof of Proposition~\ref{prop:meta-inequality-classical}]
Let $(\widetilde{W}^{\rm{NS}}_{}, \zeta)$ be a feasible solution of the program \eqref{classical-ns-program-sym}.

	\paragraph{Shared randomness} Since $\sum_y \zeta(y)=M$ and $\zeta(y)\ge 0$ for all $y$,  $ \{\frac{\zeta(y)}{M}\}_y$ is a probability distribution. Let $\mathbf{Y_1, \dots, Y_{M'}}$ be i.i.d.\ samples from $\pin= \{\frac{\zeta(y)}{M}\}_y$, which acts as the shared-randomness assistance of the protocol $\widetilde{W}^{\rm{SR}}_{}(y|x)$ to be constructed.

	\paragraph{Encoding} For an input $x$, we run the rejection sampling algorithm \cite{cao2022channel} with $M'$ steps for $\pin= \{\frac{\zeta(y)}{M}\}_y$ and $\pt = \widetilde{W}^{\rm{NS}}_{}(\cdot|x) $ and obtain $\mathbf{\widetilde{Y}}= \mathbf{Y_{i}}$ where $\mathbf{i}$  is the first index (or $M'$ if it does not exist) such that
	\begin{align*}
		\mathbf{U_i} \le  \frac{1}{M}\cdot \frac{\pt(\mathbf{Y_i})}{\pin(\mathbf{Y_i})}= \frac{\widetilde{W}^{\rm{NS}}(\mathbf{Y_i}|x)}{\zeta(\mathbf{Y_i})},
	\end{align*}
	where $\mathbf{U_i} \sim\unif ([0,1])$, then encode $\cE_{s}(m|x)=\delta_{m=\mathbf{i}}$.

	\paragraph{Decoding} Decode as $\cD_s(y|m)= \delta_{y=\mathbf{Y_m}}$. For finite $M'$, we follow \cite{cao2022channel} and show, with a simple calculation,  that the distribution of $\mathbf{\widetilde{Y}}$ satisfies for $y\in \cY$ that
	\begin{align*}
		p_{\mathbf{\widetilde{Y}}}(y)&=   \left(1-\frac{1}{M}\right)^{M'-1} \left(\pin(y)-\frac{1}{M}\cdot \pt(y)\right)
		 + \left[1-\left(1-\frac{1}{M}\right)^{M'}\right]\pt(y).
	\end{align*}
	Indeed, let us denote $\mathbf{E}$ the binary random variable such that $\mathbf{E}=1$ if and only if there is an $\mathbf{i}\le M' $ such that $ \mathbf{U_i} \le  \frac{1}{M}\cdot \frac{\pt(\mathbf{Y_i})}{\pin(\mathbf{Y_i})}$. We have by letting $\lambda=\frac{1}{M}$ that
	\begin{align*}     p_{\mathbf{\widetilde{Y}E}}(y,0) &= \sum_{y_1, \dots, y_{M'-1} } \prod_{i=1}^{M'-1}\pin(y_i)\left(1-\lambda\cdot \frac{\pt(y_i)}{\pin(y_i)}\right)
		\cdot \pin(y)\left(1-\lambda\cdot \frac{\pt(y)}{\pin(y)}\right)   
		\\&= \prod_{i=1}^{M'-1} \sum_{y_i}\left(\pin(y_i)-\lambda\cdot \pt(y_i)\right)    \cdot \left(\pin(y)-\lambda\cdot \pt(y)\right) 
		\\&= (1-\lambda)^{M'-1} \left(\pin(y)-\lambda\cdot \pt(y)\right). 
	\end{align*}
	Similarly, we get
		\begin{align*}     p_{\mathbf{\widetilde{Y}E}}(y,1) 
			&= \sum_{j=1}^{M'}\sum_{y_1, \dots, y_{j-1} } \prod_{i=1}^{j-1}\pin(y_i)\left(1-\lambda\cdot \frac{\pt(y_i)}{\pin(y_i)}\right)
			 \cdot \pin(y) \left(\lambda\cdot \frac{\pt(y)}{\pin(y)}\right)  
			\\&= \sum_{j=1}^{M'}(1-\lambda)^{j-1}\cdot \lambda\cdot  \pt(y)
			\\&= \left(1-(1-\lambda)^{M'}\right) \pt(y).
	\end{align*}
	Hence, we find
	\begin{align*}
		p_{\mathbf{\widetilde{Y}}}(y)&=  p_{\mathbf{\widetilde{Y}E}}(y,0) +p_{\mathbf{\widetilde{Y}E}}(y,1) 
		\\&=\left(1-\frac{1}{M}\right)^{M'-1} \left(\pin(y)-\frac{1}{M}\cdot \pt(y)\right)
		+ \left[1-\left(1-\frac{1}{M}\right)^{M'}\right]\pt(y). 
	\end{align*}
	Moreover, the second constraint of the program~\eqref{classical-ns-program-sym} reads
	\begin{align*}
			\forall y:\;    \pin(y)= \frac{\zeta(y)}{M}\ge \frac{1}{M}\widetilde{W}^{\rm{NS}}(y|x)=\frac{1}{M}\cdot \pt(y).
	\end{align*} 
	Therefore, we get
	\begin{align*}
		p_{\mathbf{\widetilde{Y}}}(y)&=\left(1-\frac{1}{M}\right)^{M'-1} \left(\pin(y)-\frac{1}{M}\cdot \pt(y)\right)
		+ \left[1-\left(1-\frac{1}{M}\right)^{M'}\right]\pt(y) 
		\\&\ge \left[1-\left(1-\frac{1}{M}\right)^{M'}\right]\pt(y) .
	\end{align*}
	By observing that $p_{\mathbf{\widetilde{Y}}}(y)= \widetilde{W}^{\rm{SR}}(y|x)$, where $\widetilde{W}^{\rm{SR}}(y|x)= \sum_{s\in \cS} p_S(s) \sum_{m=1}^{M'} \cE_s(m|x)\cD_s(y|m)$, we deduce the required inequality. 
\end{proof}

As a direct corollary of Proposition~\ref{prop:meta-inequality-classical}, we can control the gap between the success probabilities of strategies assisted by shared-randomness and non-signaling assistance. Although we focus on the worst-case total variation distance, it is noteworthy that Proposition~\ref{prop:meta-inequality-classical} also permits to control the gap between the success probabilities of SR and NS strategies under average-case total variation, as well as under average and worst-case Bhattacharyya distance.

\begin{corollary}\label{cor: succ-classical}
	Let $M,M'\ge 1$, and $W$ be a channel. Then, we have
	\begin{align*}
		& 1\ge\frac{\suc^{\rm{SR}}(W_{}, M')  }{\suc^{\rm{NS}}(W_{}, M)} \ge  \left[1-\left(1-\frac{1}{M}\right)^{M'}\right]. 
	\end{align*}
\end{corollary}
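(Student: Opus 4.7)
My plan is to derive Corollary \ref{cor: succ-classical} essentially as a bookkeeping step from Proposition \ref{prop:meta-inequality-classical}. The upper bound is the standard inclusion: every shared-randomness-assisted strategy is in particular non-signaling, so for any fixed code size one has $\suc^{\rm{SR}} \le \suc^{\rm{NS}}$, which together with the monotonicity of $\suc^{\rm{NS}}(W, \cdot)$ in the code size yields $\suc^{\rm{SR}}(W, M') \le \suc^{\rm{NS}}(W, M)$ (in the relevant regime $M' \le M$). The content of the statement is the lower bound.

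For the lower bound, the key observation is that the pointwise channel majorization provided by Proposition \ref{prop:meta-inequality-classical} transfers automatically to a comparison of worst-case total-variation success probabilities, via the standard identity $1 - \|p - q\|_{\TV} = \sum_y \min(p(y), q(y))$ for probability distributions. Concretely, I would first fix an optimal feasible solution $(\widetilde{W}^{\rm{NS}*}, \zeta^*)$ of the symmetry-reduced non-signaling program \eqref{classical-ns-program-sym} attaining $\suc^{\rm{NS}}(W, M)$, and apply Proposition \ref{prop:meta-inequality-classical} to produce a shared-randomness-assisted strategy $\widetilde{W}^{\rm{SR}}$ of size $M'$ satisfying
\begin{equation*}
\widetilde{W}^{\rm{SR}}(y|x) \;\ge\; c \, \widetilde{W}^{\rm{NS}*}(y|x) \quad \forall x, y, \qquad c := 1 - \left(1 - \tfrac{1}{M}\right)^{M'} \in [0, 1].
\end{equation*}

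Combining this pointwise bound with the elementary inequality $\min(cA, B) \ge c \min(A, B)$, valid for $c \in [0,1]$ and $A, B \ge 0$ by a two-case split on whether $cA \le B$ or $cA > B$, I would obtain for every $x \in \cX$
\begin{equation*}
1 - \|\widetilde{W}^{\rm{SR}}(\cdot|x) - W(\cdot|x)\|_{\TV} \;=\; \sum_y \min\!\big(\widetilde{W}^{\rm{SR}}(y|x), W(y|x)\big) \;\ge\; c \, \big(1 - \|\widetilde{W}^{\rm{NS}*}(\cdot|x) - W(\cdot|x)\|_{\TV}\big).
\end{equation*}
Taking the infimum over $x$ yields the desired bound $\suc^{\rm{SR}}(W, M') \ge c \cdot \suc^{\rm{NS}}(W, M)$. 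There is no conceptual obstacle here: the heavy lifting is in Proposition \ref{prop:meta-inequality-classical}, and the only mild checks needed are the two-case verification of the $\min$-inequality and the observation that the $\widetilde{W}^{\rm{SR}}$ produced by the rejection-sampling protocol is itself a bona fide classical channel, so that the total-variation identity applies on both sides of the comparison.
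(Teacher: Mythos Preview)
Your proof is correct and follows essentially the same route as the paper: you apply Proposition~\ref{prop:meta-inequality-classical} to the optimal NS solution, invoke the identity $1-\|p-q\|_{\TV}=\sum_y \min(p(y),q(y))$, use the elementary inequality $\min(cA,B)\ge c\min(A,B)$ for $c\in[0,1]$, and take the infimum over $x$. Your treatment of the upper bound is in fact slightly more careful than the paper's (which states it without argument); note, as you implicitly observed, that $\suc^{\rm{SR}}(W,M')\le\suc^{\rm{NS}}(W,M)$ is only guaranteed when $M'\le M$, so the upper inequality as written in the corollary is really meant for that regime.
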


\begin{proof}[Proof of Corollary~\ref{cor: succ-classical}]
Observe that for two probability distributions $p $ and $q$, $1- \|p-q\|_{\TV} =\sum_i \min (p_i, q_i)$. Let $x\in \cX$, by Proposition~\ref{prop:meta-inequality-classical} we have
\begin{align*}
	1-\left\|\widetilde{W}^{\rm{SR}}_{}(\cdot |x) - W_{}(\cdot |x)  \right\|_{\TV} 
	&= \sum_y \min\left(\widetilde{W}^{\rm{SR}}_{}(y |x) , \; W_{}(y |x)    \right)
	\\&\ge  \sum_y \left[1-\left(1-\frac{1}{M}\right)^{M'}\right]  \min\left(\widetilde{W}^{\rm{NS}}_{}(y |x) , \; W_{}(y |x)    \right)
	\\&=  \left[1-\left(1-\frac{1}{M}\right)^{M'}\right] \left( 1-\left\|\widetilde{W}^{\rm{NS}}_{}(\cdot |x) - W_{}(\cdot |x)  \right\|_{\TV}\right),
\end{align*}
where we used $\left(1-(1-1/M)^{M'}\right) \le 1$. Choosing the optimal feasible solution $\widetilde{W}^{\rm{NS}}_{}$ of the program~\eqref{classical-ns-program-sym} and taking the minimum on $x$ on both sides give the desired inequality. 
\end{proof}
In particular, when $M'=M$, the gap between $\suc^{\rm{SR}}(W_{}, M) $ and $\suc^{\rm{NS}}(W_{}, M) $ is at most $1-\frac{1}{\mathrm{e}}$. Furthermore, by choosing $M'=\ln(t)M$, we can show that the gap is at most $1-\frac{1}{t}$ and thus approaches $1$ as $t\rightarrow \infty$. This implies the fact that the non-signaling assistance does not help to reduce the asymptotic simulation capacity of a classical channel~\cite{bennett2014quantum,cao2022channel}. In words, the simulation capacity of a classical channel is the asymptotic minimum rate of communication required to simulate this channel with high success probability in the presence of shared randomness (SR), shared entanglement (EA), or non-signaling correlations (NS).

\begin{corollary}\label{app:class-cap-equal}
	Let $W$ be a channel. Then, we have
	\begin{align*}
		C^{\rm{SR}}(W)=C^{\rm{EA}}(W) =C^{\rm{NS}}(W).
	\end{align*}
\end{corollary}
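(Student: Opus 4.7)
The plan is to combine the one-shot rounding bound from Corollary~\ref{cor: succ-classical} with a tensorization argument. The inclusions $C^{\rm{NS}}(W)\le C^{\rm{EA}}(W)\le C^{\rm{SR}}(W)$ are immediate, since every SR protocol is a special EA protocol, and every EA-assisted encoder-decoder is non-signaling: more assistance can only decrease the communication rate required for simulation. Thus it suffices to prove the reverse inequality $C^{\rm{SR}}(W)\le C^{\rm{NS}}(W)$, and I would do so by applying Corollary~\ref{cor: succ-classical} to the product channel $W^{\otimes n}$.

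Concretely, fix any rate $R>C^{\rm{NS}}(W)$. By the definition of the non-signaling simulation capacity, there is a sequence of sizes $M_n\le 2^{nR}$ such that $\suc^{\rm{NS}}(W^{\otimes n},M_n)\to 1$ as $n\to\infty$. I would then set $M_n'=\lceil \ln(n)\rceil\cdot M_n$ and apply Corollary~\ref{cor: succ-classical} to $W^{\otimes n}$ with parameters $(M_n,M_n')$. Using the elementary bound $(1-1/M_n)^{M_n'}\le \exp(-M_n'/M_n)\le 1/n$, the corollary yields
\[
\suc^{\rm{SR}}(W^{\otimes n},M_n')\ge \Big(1-\tfrac{1}{n}\Big)\,\suc^{\rm{NS}}(W^{\otimes n},M_n)\longrightarrow 1.
\]
The rate of the resulting SR code is $\tfrac{1}{n}\log_2(M_n')\le R+\tfrac{1}{n}\log_2\lceil \ln(n)\rceil$, which still converges to $R$. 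Hence $R\ge C^{\rm{SR}}(W)$, and letting $R\downarrow C^{\rm{NS}}(W)$ gives $C^{\rm{SR}}(W)\le C^{\rm{NS}}(W)$, closing the chain of equalities.

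The only real subtlety, and the step I would want to be careful with, is the rate bookkeeping: the extra $\log_2\ln(n)$ bits introduced by choosing $t=n$ in the $(1-t^{-1})$-boost are sublinear in $n$ and therefore drop out of the rate, but one must match the precise definition of simulation capacity (infimum of achievable rates in the vanishing worst-case TV error regime, or equivalently success probability tending to $1$) with the one-shot quantity appearing in Corollary~\ref{cor: succ-classical}. Beyond this routine check there is no genuine obstacle; the argument is a direct consequence of the flexibility in the parameter $M'$ already built into Proposition~\ref{prop:meta-inequality-classical}.
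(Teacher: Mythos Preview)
Your proposal is correct and follows essentially the same route as the paper: both arguments apply Corollary~\ref{cor: succ-classical} to $W^{\otimes n}$ with $M'\approx \ln(t)\,M$ so that the rounding loss $(1-1/t)$ vanishes while the extra $\log\ln(t)$ bits are sublinear in $n$. The only cosmetic difference is that the paper phrases the conclusion as a proof by contradiction (assuming a rate strictly between the two capacities) and takes $t=Rn$, whereas you argue directly with $t=n$; the substance is identical.
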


\begin{proof}
Since NS contains EA and EA contains SR, we have that $C^{\rm{SR}}(W)\ge C^{\rm{EA}}(W) \ge C^{\rm{NS}}(W)$ so it suffices to show that $C^{\rm{SR}}(W)=C^{\rm{NS}}(W)$.
\\From Corollary~\ref{cor: succ-classical} we have for all $n\in \bN$, $M=2^t$ and $M'=\ln(t) 2^t$ that
\begin{align*}
	\suc^{\rm{SR}}(W^{\otimes n}, \ln(t) 2^t)  
	&\ge  \left[1-\left(1-\frac{1}{2^t}\right)^{\ln(t) 2^t}\right]\suc^{\rm{NS}}(W^{\otimes n}, 2^t) 
	\\& \ge \left(1-\frac{1}{t}\right)\suc^{\rm{NS}}(W^{\otimes n}, 2^t). 
\end{align*}
Suppose by contradiction that there is $R$ such that 
\begin{align*}
	C^{\rm{NS}}(W)<R<C^{\rm{SR}}(W).
\end{align*}
So, by taking $t=Rn$ we have 
\begin{align*}
	\suc^{\rm{NS}}(W^{\otimes n}, 2^{Rn})\underset{n \rightarrow \infty}{\longrightarrow}1
\end{align*}
and as $1-\frac{1}{t}= 1-\frac{1}{Rn} \underset{n \rightarrow \infty}{\longrightarrow}1$ we deduce that 
\begin{align*}
	\suc^{\rm{SR}}(W^{\otimes n}, \ln(Rn)2^{Rn})\underset{n \rightarrow \infty}{\longrightarrow}1,
\end{align*}
hence we find that $R=\lim_{n \rightarrow \infty}\frac{1}{n}\log(\ln(Rn)2^{Rn})\ge C^{\rm{SR}}(W)$, which contradicts the assumption $R<C^{\rm{SR}}(W)$.
\end{proof}


\subsection{Tightness}\label{sec:tightness}

Finally, we show that the bounds of Corollary~\ref{cor: succ-classical} are in fact tight.

\begin{lemma}\label{app:optimal}
For integers $n \geq k$ and $M = \ceil{\frac{n}{k}}$, consider the channel
\begin{align*}
   W : \binom{n}{k}&\rightarrow \{1,2, \dots, n\},\;W(y|x)=\frac{1}{k}\;\mathbf{1}\{y \in x\}.
\end{align*}
Then, we have that
\begin{align*}
    \text{$\suc^{\rm{NS}}(W,M)=1$, while $\suc^{\rm{SR}}(W,M')\le 1-\frac{1}{\binom{n}{k}}\binom{n-M'}{k}$.}
\end{align*}
\end{lemma}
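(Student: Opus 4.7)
My plan splits into the two claims. For the NS equality, I would exhibit a feasible point of the symmetry-reduced program~\eqref{classical-ns-program-sym} achieving objective value $1$. The natural candidate is $\widetilde{W} = W$ together with $\zeta(y) = M/n$ for all $y\in [n]$: the normalization $\sum_y \zeta(y) = M$ is immediate, and the domination constraint $\widetilde{W}(y|x)\le \zeta(y)$ reduces to $\tfrac{1}{k}\mathbf{1}\{y\in x\}\le M/n$, which holds because $M = \lceil n/k\rceil \ge n/k$. Since $\widetilde{W} = W$ the TV penalty vanishes, giving objective $1$ as required.

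For the SR upper bound, the strategy is to reduce to deterministic codes, convert TV-closeness to the probability that the decoded symbol lies in the intended set $x$, and then pass from worst-case to average-case over $x\in \binom{[n]}{k}$. I would first absorb any internal randomness of $\cE_s,\cD_s$ into the shared randomness $s$, so that $\cE_s:\binom{[n]}{k}\to [M']$ and $\cD_s:[M']\to [n]$ are deterministic. Writing $Y_s := \cD_s([M'])$, we have $|Y_s|\le M'$ and $\widetilde{W}^{\rm{SR}}(y|x) = \Pr_s[\cD_s(\cE_s(x))=y]$. Applying $\|P-Q\|_{\TV} = \max_A(P(A)-Q(A))$ with $A = [n]\setminus x$, and using that $W(\cdot|x)$ is supported on $x$, gives
\[\|\widetilde{W}^{\rm{SR}}(\cdot|x) - W(\cdot|x)\|_{\TV} \ge 1 - \Pr_s[\cD_s(\cE_s(x))\in x],\]
and therefore $\suc^{\rm{SR}}(W,M') \le \min_x \Pr_s[\cD_s(\cE_s(x))\in x]$.

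To finish, I would upper bound the minimum by the average $\tfrac{1}{\binom{n}{k}}\sum_x$, swap the order of summation over $s$ and $x$, and note that for each fixed $s$ the event $\{\cD_s(\cE_s(x))\in x\}$ is contained in $\{Y_s\cap x\ne\emptyset\}$, whose uniform-$x$ probability is $1 - \binom{n-|Y_s|}{k}/\binom{n}{k} \le 1-\binom{n-M'}{k}/\binom{n}{k}$ by a direct count on the complement $[n]\setminus Y_s$. Averaging over $s$ then yields the claimed bound.

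Every step is elementary; the only conceptual input is the min-to-average relaxation, which trades worst-case success for an average-case estimate and reduces the task to a clean covering-type count. Morally, $M'$ decoder outputs cover only a small fraction of the $k$-subsets of $[n]$, and this covering fraction is precisely what controls $\suc^{\rm{SR}}$ here. I do not expect any real obstacle: the tightness of the NS/SR gap comes out of the problem's structure almost for free once one realizes that the hard channel to simulate is uniform over all $k$-subsets.
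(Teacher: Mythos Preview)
Your proposal is correct and follows essentially the same route as the paper: exhibit $\widetilde{W}=W$ with a constant $\zeta$ for the NS part, then for the SR bound reduce to deterministic decoders, pass from worst-case to average over $x$, and use that the decoder's image $Y_s$ has size at most $M'$ so misses at least $\binom{n-M'}{k}$ of the $k$-subsets. Your choice $\zeta(y)=M/n$ is in fact slightly cleaner than the paper's $\zeta(y)=1/k$ when $k\nmid n$, since it satisfies $\sum_y\zeta(y)=M$ exactly.
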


This shows that the ratio in Corollary~\ref{cor: succ-classical} cannot be improved in general. In fact, let $M'=k$ and $M=\gamma k$ for some constant $\gamma$. Then, we immediately have
\begin{align*}
    \frac{\binom{kM-M'}{k}}{\binom{kM}{k}} &= \frac{(kM-M')\cdots (kM-M'-k+1)}{(kM)\cdots (kM-k+1)}
  \\ & \geq \left(1-\frac{k}{\gamma k^2 - k + 1}\right)^k\underset{k \to \infty}{\sim}\left(1-\frac{1}{\gamma k}\right)^{k}\underset{k \to \infty}{\longrightarrow} \mathrm{e}^{-1/\gamma}.
\end{align*}
In particular, for the channel~\cite{cubitt2011zero}
\begin{align}\label{eq:example-channel}
W : \binom{M^2}{M}\rightarrow \{1,2, \dots, M^2\}, \; W(y|x)=\frac{1}{M}\;\mathbf{1}\{y \in x\}
\end{align}
we have that
\begin{align*}
\frac{\suc^{\rm{SR}}(W_{},M)}{\suc^{\rm{NS}}(W_{},M)}\underset{M\rightarrow \infty}{\longrightarrow} 1-\frac{1}{\mathrm{e}}.
\end{align*}

\begin{proof}[Proof of Lemma~\ref{app:optimal}]
For the first equality, we can choose $\zeta(y)= \frac{1}{k}$ for all $y\in [n]$, hence $\widetilde{W}=W$ satisfies the constraints of the program \eqref{classical-ns-program-sym}. For the second inequality, the proof is similar to an argument in~\cite{cubitt2011zero}. We use $1-\|p-q\|_{\TV}= \sum_{i}\min(p_i,q_i)$ to upper bound the success probability as
\begin{align*}
	\suc^{\rm{SR}}(W,M')
	&= \inf_x \sum_{y} \min\left(\widetilde{W}^{\rm{SR}}(y|x),  \frac{1}{k}\;\mathbf{1}\{y \in x\}\right)
	\\&= \inf_x \sum_{y\in x} \min\left(\widetilde{W}^{\rm{SR}}(y|x),  \frac{1}{k}\right)
	\\&\le \inf_x \sum_{y\in x} \widetilde{W}^{\rm{SR}}(y|x)
	\\&\le \frac{1}{\binom{n}{k}}\sum_{x\in \binom{n}{k}} \sum_{y\in x} \sum_{s\in \cS} p_S(s) \sum_{m=1}^{M'} \cE_s(m|x)\cD_s(y|m)
	\\&=\sum_{s\in \cS} p_S(s) \frac{1}{\binom{n}{k}}\sum_{x\in \binom{n}{k}} \sum_{y\in x}  \sum_{m=1}^{M'} \cE_s(m|x)\cD_s(y|m).
\end{align*}
We may suppose without loss of generality that $\cD_s$ is deterministic (any stochastic map  can be written as a convex combination of deterministic maps \cite{Davis}: $\cD_s= \sum_{i\in \cS_s}p_s(i) \cD_{i,s}$ where $\cD_{i,s}$ is deterministic, then we can consider the shared randomness $\{p_S(s)p_s(i)\}_{s\in \cS,i\in \cS_s}$). So, for a fixed $s\in \cS$, the set $\cY_s = \{y : \cD_s(y|m) \neq 0 \text{ for some } m\}$ is of size at most $M'$. Hence, we find
\begin{align*}
	 \suc^{\rm{SR}}(W,M')
	&\le\sum_{s\in \cS} p_S(s) \frac{1}{\binom{n}{k}}\sum_{x\in \binom{n}{k}} \sum_{y\in x}  \sum_{m=1}^{M'} \cE_s(m|x)\cD_s(y|m)
	\\&= \sum_{s\in \cS} p_S(s) \frac{1}{\binom{n}{k}}\sum_{x\in \binom{n}{k}: x\cap \cY_s= \emptyset } \sum_{y\in x}  \sum_{m=1}^{M'} \cE_s(m|x)\cdot 0
	\\&\quad+ \sum_{s\in \cS} p_S(s) \frac{1}{\binom{n}{k}}\sum_{x\in \binom{n}{k}: x\cap \cY_s\neq \emptyset} \sum_{y\in x}  \sum_{m=1}^{M'} \cE_s(m|x)\cD_s(y|m)
	\\&\le \sum_{s\in \cS} p_S(s) \frac{1}{\binom{n}{k}}\sum_{x\in \binom{n}{k}: x\cap \cY_s\neq \emptyset} 1
	\\&= \sum_{s\in \cS} p_S(s) \frac{1}{\binom{n}{k}}\left(\binom{n}{k}- \binom{n-M'}{k}\right)
	\\&=\frac{1}{\binom{n}{k}}\left(\binom{n}{k}- \binom{n-M'}{k}\right).
\end{align*}
So, the gap between the NS and SR success probabilities can be as large as
\begin{align*}
    \frac{\suc^{\rm{SR}}(W,M')}{\suc^{\rm{NS}}(W,M)}\le 1-\frac{\binom{n-M'}{k}}{\binom{n}{k}} 
\end{align*}
for all integers $n,k$ such that $\ceil{\frac{n}{k}}=M$. 
\end{proof}

These simulation results are analogous to the known rounding results for channel coding~\cite{barman2017algorithmic}. However, while even the rounding factors are similar (although slightly stronger here!), we emphasize that the techniques needed are fundamentally different. In fact, the simulation meta-converse is related to the smoothed max-divergence \cite{fang2019quantum}, while the coding PPV meta-converse can be phrased in terms of the hypothesis testing divergence~\cite{matthews2012linear}.


\section{Non signaling assistance in the quantum setting}\label{sec:NS}

In this section we describe the non-signaling assistance of Section \ref{sec:cl-ns}  in the general quantum setting. 
Non-signaling assistance represents shared correlations that do not permit communication without external help. Let $\cW_{A_i\rightarrow B_o}$ be the quantum channel to be simulated and let $A_o, B_i$ be quantum systems of dimension $M$ where $M$ is the communication size. More precisely, a non signaling super map $\Pi$ is a super channel of input systems $A_i, B_i$ and output systems $A_o, B_o$ that maps a channel $\cW_{A_o\rightarrow B_i}$ to a channel $(\Pi\circ \cW)_{A_i \rightarrow B_o}$.  The Choi matrix of a valid non-signaling super channel $\Pi$  satisfies~\cite{fang2019quantum}: 
\begin{align*}
	&  J_\Pi \mge 0, \quad \ptr{A_oB_o}{J_\Pi}= \dI_{A_iB_i} && \text{(CP), (TP)},
	\\&  \ptr{A_o}{J_\Pi}= \frac{\dI_{A_i}}{|A_i|} \otimes \ptr{A_oA_i}{J_\Pi} && (A\nrightarrow B),
	\\&  \ptr{B_o}{J_\Pi}= \frac{\dI_{B_i}}{|B_i|} \otimes \ptr{B_iB_o}{J_\Pi}&& (B\nrightarrow A).
\end{align*}
We denote the set of non-signaling superchannels by $\cN\cS(A_oB_o|A_iB_i)$. 
The importance of non-signaling assistance lies in its ability to simplify the computation of the maximum success probability, which can be obtained by solving the following program:
\begin{align}\label{quantum-ns-program-general}
\begin{aligned}
    \suc^{\rm{NS}}(\cW,M) = \max_{\Pi} &\;1- \left\| \widetilde{\cW}_{A_i \rightarrow B_o} - \cW_{A_i \rightarrow B_o}\right\|_{\diamond} 
	\\  \rm{s.t.}\;\;& \Pi\in \cN\cS(A_oB_o|A_iB_i),
   \\ &   \widetilde{\cW}_{A_i \rightarrow B_o}=  \Pi\circ \id_{A_o \rightarrow B_i}. 
\end{aligned}
\end{align}
So, for simulating quantum channels with non-signaling strategies, the success probability can be expressed as an SDP program. Using \cite[Corollary~$2$]{fang2019quantum}, this program can be further simplified to:
\begin{align}\label{ns-program-qq}
\begin{aligned}
   \suc_{}^{\rm{NS}}(\cW, M) =\max_{\widetilde{\cW}_{A\rightarrow B}, V_B} & 1-\left\| \cW_{A\rightarrow B} - \widetilde{\cW}_{A\rightarrow B}  \right\|_{\diamond}  
	\\\rm{s.t.}\quad &  \widetilde{\cW}_{A\rightarrow B} \quad \text{quantum  channel,}  
	\\&   J_{\widetilde{\cW}} \mle  \dI_{A'} \otimes V_B, \quad  
	\\&    \tr{V_B}=M^2. 
\end{aligned}
\end{align}
Notably, when only classical instead of quantum communication is allowed, with the classical identity channel
\begin{align*}
	\id_c(\rho)=\id^c_{A_o \rightarrow B_i}(\rho)=   \sum_{x=1}^N \bra{x}\rho\ket{x} \proj{x},
\end{align*}
the non-signaling program is similar to \eqref{ns-program-qq} except that $M^2$ is replaced by $N$ (see Lemma~\ref{app:ns-program-qc-class} for a proof). Since EA strategies are non-signaling, the success probability of NS strategies is at least as its EA counterparts
\begin{align*}
	\suc^{\rm{NS}}({\cW},M) \ge \suc^{\rm{EA}} ({\cW},M). 
\end{align*}
In the following we study the gap between these success probabilties in the case of simulating classical-quantum and quantum channels. 

 
\section{Simulation of classical-quantum channels}\label{sec:CQ}

We are able to generalize the classical results of Section \ref{sec:classical} to the classical-quantum setting. In this section, we  provide rounding protocols to the simulation of classical-quantum channels with classical communication
and with entanglement-assisted strategies. 

\subsection{Entanglement-assistance}

Classical-quantum (CQ) channels can be seen as special cases of quantum channels where the input consists of commuting states. Since the output is quantum, it is natural to consider entanglement assisted strategies. The difference between SR and EA for CQ channel simulation remains open, even when considering asymptotic capacities. So, we focus on EA and define the task of simulating a CQ channel $\cW_{X\rightarrow B}$, with classical input system $X$ and quantum output system $B$, as follows.

A size-$M$ EA $\eps$-simulation code for $\cW_{X\rightarrow B}$ is a triple $(\{E^m_x \}_{m\in [M]}, \{\cN^m_{K\rightarrow B}\}_{m\in [M]}, \sigma_{K'K}  )$ consisting of a measurement POVM, a family of quantum decoding  channels, and a shared entanglement state   such that the channel:
\begin{align}\label{def:task-cq}
\widetilde{\cW}_{X\rightarrow B} : x\mapsto \sum_{m\in [M]} \cN^m_{K \rightarrow B} \left(\ptr{K'}{E^m_x \otimes \dI_{K} \cdot {\sigma}_{K'K} }   \right)
\end{align}
approximates to desired channel $\cW_{X\rightarrow B}$ in the diamond distance, i.e. it satisfies:
\begin{align*}
    \sup_{x\in \cX} \left\| \widetilde{\cW}_{X\rightarrow B}(x)- \cW_{X\rightarrow B}(x)\right\|_{\operatorname{Tr}} \le \eps.
\end{align*}
By considering a probability distribution $p_X$ as classical state $\rho_X= \sum_{x\in \cX} p_X(x) \proj{x}$, we can see that the task of simulating CQ channels is the same as the task of simulating quantum channels with classical input and  classical communication. Hence the non-signaling program is an SDP and follows from Lemma \ref{app:ns-program-qc-class}:
\begin{align}\label{ns-program-cq}
\begin{aligned}
    	\suc_{}^{\rm{NS}}(\cW, M) =\max_{\widetilde{\cW}_{X\rightarrow B}, V_B} & 1-\sup_{x\in \cX}\left\| \cW_{X\rightarrow B}(x) - \widetilde{\cW}_{X\rightarrow B}(x)  \right\|_{\operatorname{Tr}}  
	\\\rm{s.t.}\quad &  \widetilde{\cW}_{X\rightarrow B} \quad \text{CQ  channel,}  
	\\&  \widetilde{\cW}_{X\rightarrow B}(x) \mle V_B\quad  \forall x\in \cX, \quad 
	\\&    \tr{V_B}=M,  
\end{aligned}
\end{align}
where we used the equivalence
\begin{align*}
     J_{\widetilde{\cW}} \mle  \dI_{X'} \otimes V_B \Leftrightarrow \sum_{x\in \cX} \proj{x}_{X'}\otimes \widetilde{\cW}_{X\rightarrow B}(x) \mle  \dI_{X'} \otimes V_B 
     \Leftrightarrow \forall x\in \cX : \widetilde{\cW}_{X\rightarrow B}(x) \mle V_B.
\end{align*}
\subsection{Rounding}
Since entanglement assisted strategies are non-signaling, the non-signaling success probability is at least as its entanglement assisted counterpart:
\begin{align*}
    	\suc_{}^{\rm{NS}}(\cW, M) \ge 	\suc_{}^{\rm{EA}}(\cW, M).
\end{align*}
We can inquire about the extent to which these quantities can differ. Importantly, the answer to this question is similar to Corollary~\ref{cor: succ-classical} although the context and the proof strategies are slightly different. We round a feasible solution of the program \eqref{ns-program-cq} to an EA strategy. Then we compare the  distance between the resulting channel of this strategy to the feasible non-signaling channel. We prove a meta-inequality between the  EA and the NS channels that can be of independent interest.

\begin{proposition}\label{prop:meta-inequality-cq}
	Let $M, M'\in \bN$ and $\widetilde{\cW}^{\rm{NS}}_{}$ be a feasible solution of the program \eqref{ns-program-cq} of size $M$. Then, for any $M'$, there exists an entanglement-assisted strategy $\widetilde{\cW}^{\rm{EA}}_{}$ of size $M'$ such that for all $x\in \cX$:
	\begin{align*}
		\widetilde{\cW}^{\rm{EA}}(x) &\mge\left[1-\left(1-\frac{1}{M}\right)^{M'}\right]	\widetilde{\cW}^{\rm{NS}}(x).
	\end{align*}
\end{proposition}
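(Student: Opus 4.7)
The plan is to lift the classical rejection sampling argument of Proposition~\ref{prop:meta-inequality-classical} to the quantum setting via the Hughston-Jozsa-Wootters steering theorem, following the quantum rejection sampling approach of \cite{cao2023channel}. Writing $\omega_B := V_B/M$, the constraint $\tr{V_B}=M$ makes $\omega_B$ a density matrix, and the program constraint $\widetilde{\cW}^{\rm NS}(x) \mle V_B$ becomes $\sigma_x:=\widetilde{\cW}^{\rm NS}(x) \mle M\omega_B$, which is the operator analogue of the classical bound $\widetilde{W}^{\rm NS}(y|x) \le \zeta(y)$ used in Proposition~\ref{prop:meta-inequality-classical}.

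As shared entanglement I would take $M'$ independent copies of a purification $|\phi\rangle_{K'K}$ of $\omega_B$, with Alice holding $K'_1,\ldots,K'_{M'}$ and Bob holding $K_1,\ldots,K_{M'}$. By the steering theorem, for each input $x$ there is a binary POVM $\{A^x,\dI-A^x\}$ on a single $K'$ such that the ``accept'' outcome occurs with probability $\tr{A^x \omega_B}=1/M$ and leaves Bob's paired system $K$ in the exact state $\sigma_x$; explicitly, in the Schmidt basis one takes $A^x=(1/M)(\omega_B^{-1/2}\sigma_x\omega_B^{-1/2})^T$, which is a valid effect precisely because $\sigma_x\mle M\omega_B$ implies $0\mle A^x\mle \dI$ (with the inverse taken on the support, which is permitted since $\mathrm{supp}(\sigma_x)\subseteq \mathrm{supp}(\omega_B)$).

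Alice's encoding on input $x$ is then sequential quantum rejection sampling: she applies $\{A^x,\dI-A^x\}$ to $K'_1,K'_2,\ldots$ in order and sends the index $m\in [M']$ of the first acceptance, defaulting to $m=M'$ if all $M'$ measurements reject. Since these act on disjoint subsystems they commute, so the induced $\{E^m_x\}_{m\in [M']}$ form a valid POVM; Bob's decoder $\cN^m$ outputs $K_m$ and discards the rest, assembling into a size-$M'$ EA protocol in the form of \eqref{def:task-cq}. Using the tensor-product independence, for $m<M'$ the event ``first accept at $m$'' occurs with probability $(1-1/M)^{m-1}(1/M)$ and conditional on it Bob's $K_m$ is exactly $\sigma_x$; for $m=M'$, combining the sub-events ``first accept at $M'$'' and ``all reject'' yields a contribution $(1-1/M)^{M'-1}\omega_B$ after the $(\omega_B-\sigma_x/M)$ pieces cancel cleanly. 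Summing the geometric series gives
\begin{align*}
\widetilde{\cW}^{\rm EA}(x)=\left[1-\left(1-\tfrac{1}{M}\right)^{M'-1}\right]\sigma_x+\left(1-\tfrac{1}{M}\right)^{M'-1}\omega_B,
\end{align*}
and subtracting $[1-(1-1/M)^{M'}]\sigma_x$ reduces the desired inequality to $\omega_B-\sigma_x/M\mge 0$, equivalently $V_B\mge \sigma_x$, which is exactly the non-signaling constraint.

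The main obstacle is the clean identification of the post-measurement conditional state via steering and a careful bookkeeping of the ``all-reject'' branch aggregated with ``first accept at $M'$''; once this is set up the quantum calculation becomes a structural mirror of the classical one. Note that the convex-split coherent rejection sampling of \cite{anshu2017quantum} averages over all indices $m$ in a way that tends to only yield additive (not multiplicative) bounds, which is why the finer sequential scheme of \cite{cao2023channel} is needed in order to preserve the multiplicative factor $[1-(1-1/M)^{M'}]$.
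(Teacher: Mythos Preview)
Your proposal is correct and follows essentially the same approach as the paper: $M'$ copies of a purification of $V_B/M$ as shared entanglement, a per-copy accept/reject test with acceptance probability $1/M$ steering Bob's register to $\widetilde{\cW}^{\rm NS}(x)$, transmission of the first accepting index, and the same geometric-series computation (your output state $[1-(1-1/M)^{M'-1}]\sigma_x+(1-1/M)^{M'-1}\omega_B$ is algebraically identical to the paper's $(1-\lambda^{M'})\sigma_x+\lambda^{M'}\zeta(x)$). The only cosmetic difference is that the paper builds the accept/reject measurement via Uhlmann's theorem (isometry $\cI^{(x)}$ followed by measuring the flag register $S$), whereas you write down the POVM element $A^x=(1/M)(\omega_B^{-1/2}\sigma_x\omega_B^{-1/2})^T$ directly from the steering/HJW formula; these are two descriptions of the same instrument.
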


The proof is based on a quantum analog of rejection sampling technique~\cite{cao2023channel} similar to the convex split technique (e.g., \cite{anshu2017quantum}). While the convex split technique is powerful for some tasks~\cite{anshu2017quantum,cheng2023tight,cheng2023quantum}, it not clear whether it allows to prove the multiplicative approximation of Proposition~\ref{prop:meta-inequality-cq}. Moreover, an additive approximation is possible using the convex split technique. We refer to Section~\ref{sec:quantum} where we use this technique in the fully quantum setting.

\begin{proof}[Proof of Proposition \ref{prop:meta-inequality-cq}]
    Let $(\widetilde{\cW}^{\rm{NS}}_{X\rightarrow B}, V_B)$ be a feasible solution of the program \eqref{ns-program-cq}. 
Consider the following scheme inspired by \cite{cao2023channel}.

\paragraph{Shared entanglement} Denote $V_B=V_{E_B}$
  where $E_B$ is a system hold by Bob with the same dimension as $B$. Let $\ket{V}_{E_AE_B}$ be a purification of $ \frac{V_{E_B}}{M}$ where $E_A$ is a system hold by Alice with the same dimension as $E_B$ ($|E_A|=|E_B|= |B| $). $\ket{V}_{E_AE_B}^{\otimes M'}$ is the shared-entanglement between Alice and Bob. 

\paragraph{Encoding} For an input  $x\in \cX$,  we have
\begin{align*}
	\frac{V_B}{M}\mge \left(\frac{1}{M}\right)\cdot  \widetilde{\cW}^{\rm{NS}}_{X\rightarrow B}(x).  
\end{align*}
so we can write for some state $\zeta_B(x)$ that
	\begin{align*}
		 \frac{V_B}{M}= \left(1-\lambda\right) \widetilde{\cW}^{\rm{NS}}_{X\rightarrow B}(x) + \lambda \zeta_B(x),
	\end{align*}
	where $\lambda=1-\frac{1}{M}$.
 Let
 \begin{align*}
		\ket{\widetilde{V}_x}= \sqrt{1-\lambda} \ket{0}_{S} \ket{\sigma(x)}_{Q_A E_B} + \sqrt{\lambda} \ket{1}_{S} \ket{\zeta(x)}_{Q_A E_B},
	\end{align*}
 where $\ket{\sigma(x)}$ (resp. $\ket{\zeta(x)}_{Q_A E_B}$) is a purification of $\widetilde{\cW}^{\rm{NS}}_{X\rightarrow B}(x)$ (resp. $\zeta_B(x)$),  
 $Q_A$ is the environment system hold by Alice and $E_B$ is the output system hold by Bob. We have  
 \begin{align*}
		\ptr{SQ_A}{\proj{\widetilde{V}_x}}&=\left(1-\lambda\right) \widetilde{\cW}^{\rm{NS}}_{X\rightarrow B}(x) + \lambda \zeta_B(x) 
		=  \frac{V_B}{M}=\ptr{E_A}{\proj{V}_{E_AE_B}}, 
	\end{align*}
 hence by Uhlmann's theorem~\cite{uhlmann1976transition}, there is an isometry $\cI^{(x)}$ such that,
	\begin{align*}
		\cI^{(x)}_{E_A\rightarrow SQ_A}\cdot \left(\ket{V}^{\otimes M'}_{E_AE_B} \right)= \ket{\widetilde{V}_x}_{SQ_AE_B}^{\otimes M'}.
	\end{align*}
	For an input  $x\in \cX$, Alice can then apply the isometry $\cI^{(x)}_{E_A\rightarrow S Q_A }$ on her part of the shared entangled state $\ket{V}_{E_AE_B}^{\otimes M'}$. Formally, Alice applies the isometry 
 \begin{align*}
     \cI^{}_{XE_A\rightarrow XS Q_A }=\sum_{x\in \cX} \proj{x}_X\otimes\cI^{(x)}_{E_A\rightarrow S Q_A }.
 \end{align*}
On a input $x\in \cX$, the resulting  state is,
	\begin{align*}
		 \cI^{}_{XE_A\rightarrow XS Q_A } \ket{x}_{X} \otimes\ket{V}^{\otimes M'}_{E_AE_B} &=  \ket{x}_{X}\otimes \cI^{(x)}_{E_A\rightarrow S Q_A } \ket{V}^{\otimes M'}_{E_AE_B}
 =\ket{x}_{X}\otimes \ket{\widetilde{V}_x}_{SQ_AE_B}^{\otimes M'}
  \\&=\ket{x}_{X} \sum_{y \in \{0,1\}^{M'}} 
		\sqrt{1-\lambda}^{|\bar{y}|}\sqrt{\lambda}^{|y|} \ket{y}_{S^{M'}} \otimes \ket{\sigma(x)}^{\otimes |\bar{y}|}\otimes \ket{\zeta(x)}^{\otimes|y|}, 
	\end{align*} 
	where $|y|=M'-|\bar{y}| =\sum_{i} y_i$.  
 Finally, Alice measures the system $S$ and observes $\mathbf{y}=y \in \{0,1\}^{M'} $ with probability $(1-\lambda)^{|\bar{y}|}(\lambda)^{|y|}$. The probability that $\mathbf{y}$ contains at least one ‘$0$' is
	\begin{align*}
		\pr{ 0 \in \mathbf{y}}= 1- \pr{\mathbf{y}=1\cdots 1}= 1- \lambda^{M'}. 
	\end{align*}
	Let $\mathbf{m}$ be the index of the first ‘$0$' in $\mathbf{y}$ if $\mathbf{y}\neq 1\cdots 1$ and  $M'$ otherwise. Alice sends $\mathbf{m}\in [M']$ to Bob using 
    the classical identity channel of dimension $M'$.

    In terms of the measurement channel in Definition \ref{def:task-cq}, we can make the identifications: 
    \begin{align*}
     \sigma_{K'K}&=  \proj{V}_{E_AE_B}^{\otimes M'}, \; K'= E_A, \; K=E_B,
    \\    E_x^{m} &= (\cI^{(x)}_{E_A\rightarrow SQ_A})^{\dagger} \proj{m}_{S^{M'}}\cI^{(x)}_{E_A\rightarrow SQ_A}.
    \end{align*}

 \paragraph{Decoding} 
 Bob returns the $\mathbf{m}$'s (post-measurement) copy of $E_B$. 
 This state is denoted $ \widetilde{\cW}_{\mathbf{m}}^{\rm{EA}}(x)$ with
	\begin{align*}
		 \widetilde{\cW}^{\rm{EA}}(x) &=\exs{\mathbf{m}}{ \widetilde{\cW}_{\mathbf{m}}^{\rm{EA}}(x)} 
		\\&= \sum_{y\neq 1\cdots 1} \pr{\mathbf{y}=y} \widetilde{\cW}^{\rm{NS}}(x)  + \pr{\mathbf{y}=1\cdots 1} \zeta(x)
		\\&= \left(1-\lambda^{M'} \right)   \widetilde{\cW}^{\rm{NS}}_{X\rightarrow B}(x)+ \lambda^{M'}\zeta_{B}(x)
		\\&\mge \left[1-\left(1-\frac{1}{M}\right)^{M'} \right]  \widetilde{\cW}^{\rm{NS}}_{X\rightarrow B}(x).
	\end{align*}
\end{proof}

As a consequence of Proposition~\ref{prop:meta-inequality-cq}, we can control the gap between the EA and NS success probabilities under the diamond distance. We note that a similar statement is implied by Proposition~\ref{prop:meta-inequality-cq} for fidelity-based distances as well.
\begin{corollary}\label{cor:gap-succ-cq}
	Let $M,M'\ge 1$ and $\cW$ be a CQ channel.  We have
	\begin{align*}
		1&\ge \frac{\suc^{\rm{EA}}(\cW, M')}{\suc_{}^{\rm{NS}}(\cW, M)}\ge 1- \left(1-\frac{1}{M}\right)^{M'},
  \\  1&\ge \frac{\suc^{\rm{EA}}(\cW, M\ln(t))}{\suc_{}^{\rm{NS}}(\cW, M)}\ge 1-\frac{1}{t}\quad \forall t>0\;.
	\end{align*}
\end{corollary}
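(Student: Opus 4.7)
The plan is to derive both bounds as direct pointwise consequences of Proposition~\ref{prop:meta-inequality-cq}, in the same spirit as the classical Corollary~\ref{cor: succ-classical}. The one quantum-specific adjustment is that the classical identity $1-\|p-q\|_{\TV}=\sum_i\min(p_i,q_i)$ has no convenient noncommutative analog, so I would replace the $\min$-comparison by a short triangle-inequality argument that yields the same multiplicative factor.

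First, I would fix an arbitrary feasible solution $\widetilde{\cW}^{\rm{NS}}$ of the program~\eqref{ns-program-cq} of size $M$ and let $\widetilde{\cW}^{\rm{EA}}$ be the entanglement-assisted strategy of size $M'$ produced by Proposition~\ref{prop:meta-inequality-cq}, so that $\widetilde{\cW}^{\rm{EA}}(x)\mge \alpha\,\widetilde{\cW}^{\rm{NS}}(x)$ for every $x\in\cX$, where $\alpha=1-(1-1/M)^{M'}\in[0,1]$. Since $\widetilde{\cW}^{\rm{EA}}(x)$ is a state and $\alpha\le 1$, the difference $\widetilde{\cW}^{\rm{EA}}(x)-\alpha\,\widetilde{\cW}^{\rm{NS}}(x)$ is positive of trace $1-\alpha$, and hence equals $(1-\alpha)\xi(x)$ for some state $\xi(x)$. (This is in fact the explicit convex decomposition already exhibited at the end of the proof of Proposition~\ref{prop:meta-inequality-cq}.) The triangle inequality for $\|\cdot\|_{\operatorname{Tr}}$, together with $\|\xi(x)-\cW(x)\|_{\operatorname{Tr}}\le 1$, then gives
\[
\bigl\|\widetilde{\cW}^{\rm{EA}}(x)-\cW(x)\bigr\|_{\operatorname{Tr}}\le \alpha\bigl\|\widetilde{\cW}^{\rm{NS}}(x)-\cW(x)\bigr\|_{\operatorname{Tr}}+(1-\alpha),
\]
which, after subtracting from $1$, yields the pointwise bound
\[
1-\bigl\|\widetilde{\cW}^{\rm{EA}}(x)-\cW(x)\bigr\|_{\operatorname{Tr}}\ge \alpha\Bigl(1-\bigl\|\widetilde{\cW}^{\rm{NS}}(x)-\cW(x)\bigr\|_{\operatorname{Tr}}\Bigr).
\]

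To conclude, I would take the infimum over $x\in\cX$ on both sides (recalling that the diamond distance on CQ channels equals the sup of the trace distance over classical inputs) and then optimize over feasible $\widetilde{\cW}^{\rm{NS}}$. This produces $\suc^{\rm{EA}}(\cW,M')\ge \alpha\,\suc^{\rm{NS}}(\cW,M)$, which is the first stated lower bound. For the second one, I would plug in $M'=M\ln(t)$ and use $(1-1/M)^{M\ln(t)}\le \mathrm{e}^{-\ln(t)}=1/t$. The upper bound by $1$ follows from $\suc^{\rm{EA}}\le \suc^{\rm{NS}}$ in the natural regime $M'\le M$, as EA strategies are in particular non-signaling. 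There is no real obstacle here: all of the substantive difficulty has been absorbed into Proposition~\ref{prop:meta-inequality-cq}, and what remains is a one-line conversion of an operator inequality into a trace-distance inequality.
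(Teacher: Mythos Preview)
Your argument is correct and essentially identical to the paper's: both extract from Proposition~\ref{prop:meta-inequality-cq} the convex decomposition $\widetilde{\cW}^{\rm{EA}}(x)=\alpha\,\widetilde{\cW}^{\rm{NS}}(x)+(1-\alpha)\zeta(x)$, apply the triangle inequality for $\|\cdot\|_{\operatorname{Tr}}$ together with $\|\zeta(x)-\cW(x)\|_{\operatorname{Tr}}\le 1$, and then take the infimum over $x$ and the optimal NS solution. Your remark that the upper bound $\le 1$ only follows in the regime $M'\le M$ is well observed; the paper's proof in fact does not address the upper bound at all.
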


\begin{proof}[Proof of Corollary~\ref{cor:gap-succ-cq}]
Let $\alpha= 1-\left(1-\frac{1}{M}\right)^{M'}$ and $x\in \cX$. Let $\widetilde{\cW}^{\rm{NS}}_{}$  be a feasible solution of the program~\eqref{ns-program-cq}. Proposition~\ref{prop:meta-inequality-cq} implies the existence of a state $\zeta$ such that
\begin{align*}
	\widetilde{\cW}^{\rm{EA}}_{X\rightarrow B}(x)= \alpha  \widetilde{\cW}^{\rm{NS}}_{X\rightarrow B}(x) +(1-\alpha) \zeta.
\end{align*}
So, we have by the triangle inequality that
\begin{align*}
	 1-\left\|  \widetilde{\cW}^{\rm{EA}}_{X\rightarrow B}(x) - {\cW}_{X\rightarrow B}(x) \right\|_{\operatorname{Tr}} 
	&=1-\left\| \alpha  \widetilde{\cW}^{\rm{NS}}_{X\rightarrow B}(x) +(1-\alpha) \zeta - {\cW}^{}_{X\rightarrow B}(x) \right\|_{\operatorname{Tr}} 
	\\&\ge 1-\alpha\left\|  \widetilde{\cW}^{\rm{NS}}_{X\rightarrow B}(x) - {\cW}^{}_{X\rightarrow B}(x) \right\|_{\operatorname{Tr}}
	\\&\quad -(1-\alpha)\left\| \zeta - {\cW}^{\rm{}}_{X\rightarrow B}(x) \right\|_{\operatorname{Tr}}
	\\&\ge 1-\alpha\left\|  \widetilde{\cW}^{\rm{NS}}_{X\rightarrow B}(x) - {\cW}^{\rm{}}_{X\rightarrow B}(x) \right\|_{\operatorname{Tr}}-(1-\alpha)
	\\&= \alpha\left(1-\left\|  \widetilde{\cW}^{\rm{NS}}_{X\rightarrow B}(x)- {\cW}^{\rm{}}_{X\rightarrow B}(x)\right\|_{\operatorname{Tr}}\right).
\end{align*}
Choosing the optimal feasible solution $\widetilde{\cW}^{\rm{NS}}_{}$ of the program~\eqref{ns-program-cq} and taking the infimum over $x\in \cX$ yields the Corollary.
\end{proof}
\sloppy Similar to the classical case, when $M'=M$, the gap between $\suc^{\rm{EA}}(\cW_{}, M) $ and $\suc^{\rm{NS}}(\cW_{}, M) $ is at most $1-\frac{1}{\mathrm{e}}$ \footnote{Achieving a similar rounding constant for coding over classical-quantum channels remains an open problem \cite{fawzi2019approximation}.}.   Furthermore, by choosing   $M'=\ln(t)M$, the gap is at most $1-\frac{1}{t}$ and approaches $1$ as $t\rightarrow \infty$. This recovers the fact that  the non-signaling correlations do not help to reduce the asymptotic entanglement-assisted simulation capacity of a CQ channel~\cite{Cao22}. 
In words, the entanglement-assisted simulation capacity of a quantum channel is the asymptotic minimum rate of communication needed to simulate this channel with high success probability when shared entanglement is available.

\begin{corollary}\label{app:cq-cap-equal}
	Let $\cW_{X\rightarrow B}$ be a CQ channel. Then, we have
	\begin{align*}
		Q^{\rm{EA}}(\cW_{X\rightarrow B})= Q^{\rm{NS}}(\cW_{X\rightarrow B}).
	\end{align*}	
\end{corollary}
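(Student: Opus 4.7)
The plan is to mirror the argument used for Corollary~\ref{app:class-cap-equal} in the classical setting, with Corollary~\ref{cor:gap-succ-cq} playing the role of Corollary~\ref{cor: succ-classical}. The forward direction $Q^{\rm{EA}}(\cW) \ge Q^{\rm{NS}}(\cW)$ is immediate: since every EA-assisted simulation strategy is non-signaling, we have $\suc^{\rm{NS}}(\cW^{\otimes n}, M) \ge \suc^{\rm{EA}}(\cW^{\otimes n}, M)$ for all $n, M$, so any rate achievable with EA assistance is also achievable with NS assistance, giving $Q^{\rm{NS}}(\cW) \le Q^{\rm{EA}}(\cW)$.

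For the reverse inequality $Q^{\rm{EA}}(\cW) \le Q^{\rm{NS}}(\cW)$, I would proceed by contradiction. Suppose there exists $R$ with
\[
Q^{\rm{NS}}(\cW) < R < Q^{\rm{EA}}(\cW).
\]
By definition of $Q^{\rm{NS}}(\cW)$ as the asymptotic minimum rate for NS-assisted simulation, the choice $M = 2^{Rn}$ satisfies $\suc^{\rm{NS}}(\cW^{\otimes n}, 2^{Rn}) \underset{n\to\infty}{\longrightarrow} 1$. Now apply the second inequality of Corollary~\ref{cor:gap-succ-cq} to the CQ channel $\cW^{\otimes n}$ with $t = Rn$ and $M = 2^{Rn}$, so that $M\ln(t) = \ln(Rn)\, 2^{Rn}$. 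This yields
\[
\suc^{\rm{EA}}\!\bigl(\cW^{\otimes n},\, \ln(Rn)\, 2^{Rn}\bigr) \ge \Bigl(1 - \tfrac{1}{Rn}\Bigr)\, \suc^{\rm{NS}}(\cW^{\otimes n}, 2^{Rn}) \underset{n\to\infty}{\longrightarrow} 1.
\]

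Since the EA-assisted simulation of $\cW^{\otimes n}$ succeeds with high probability using $\ln(Rn)\, 2^{Rn}$ messages, the EA simulation rate is at most
\[
\lim_{n\to\infty}\frac{1}{n}\log\bigl(\ln(Rn)\, 2^{Rn}\bigr) = R,
\]
so $Q^{\rm{EA}}(\cW) \le R$, contradicting $R < Q^{\rm{EA}}(\cW)$. Hence $Q^{\rm{EA}}(\cW) \le Q^{\rm{NS}}(\cW)$, and combined with the forward direction, we conclude $Q^{\rm{EA}}(\cW) = Q^{\rm{NS}}(\cW)$. The only step that requires any care is verifying that Corollary~\ref{cor:gap-succ-cq} applies to the $n$-fold tensor channel $\cW^{\otimes n}$, which is again a CQ channel (with classical input alphabet $\cX^n$ and quantum output on $B^{\otimes n}$), so no obstacle arises.
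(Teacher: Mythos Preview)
Your proof is correct and follows essentially the same approach as the paper: a contradiction argument using Corollary~\ref{cor:gap-succ-cq} with $t=Rn$ and $M=2^{Rn}$ to show that $\suc^{\rm{EA}}(\cW^{\otimes n},\ln(Rn)2^{Rn})\to 1$, whence the limiting rate $R$ contradicts $R<Q^{\rm{EA}}(\cW)$. You additionally make explicit the easy direction $Q^{\rm{NS}}(\cW)\le Q^{\rm{EA}}(\cW)$ and the fact that $\cW^{\otimes n}$ is again CQ, both of which the paper leaves implicit.
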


\begin{proof}
From Corollary~\ref{cor:gap-succ-cq} we have for all $M=2^t,M'=\ln(t) 2^t$ and $n\in \bN$ that
\begin{align*}
	\suc^{\rm{EA}}(\cW_{X\rightarrow B}^{\otimes n}, \ln(t) 2^t)  
	& \ge  \left[1-\left(1-\frac{1}{2^t}\right)^{\ln(t) 2^t}\right]\suc^{\rm{NS}}(\cW_{X\rightarrow B}^{\otimes n}, 2^t) 
	\\& \ge \left(1-\frac{1}{t}\right)\suc^{\rm{NS}}(\cW_{X\rightarrow B}^{\otimes n}, 2^t). 
\end{align*}
Suppose by contradiction  that there is $R$ such that 
\begin{align*}
	Q^{\rm{NS}}(\cW_{X\rightarrow B})<R<Q^{\rm{EA}}(\cW_{X\rightarrow B}).
\end{align*}
So, by taking $t=Rn$ we have 
\begin{align*}
	\suc^{\rm{NS}}(\cW_{X\rightarrow B}^{\otimes n}, 2^{Rn})\underset{n \rightarrow \infty}{\longrightarrow}1
\end{align*}
and as $1-\frac{1}{t}= 1-\frac{1}{Rn} \underset{n \rightarrow \infty}{\longrightarrow}1$ we deduce that 
\begin{align*}
	\suc^{\rm{EA}}(\cW_{X\rightarrow B}^{\otimes n}, \ln(Rn)2^{Rn})\underset{n \rightarrow \infty}{\longrightarrow}1.
\end{align*}
\sloppy Hence, $R=\lim_{n \rightarrow \infty}\frac{1}{n}\log(\ln(Rn)2^{Rn})\ge Q^{\rm{EA}}(\cW_{X\rightarrow B})$
which contradicts the assumption $R<Q^{\rm{EA}}(\cW_{X\rightarrow B})$.
\end{proof}
Finally, it is not clear whether the approximation ratio we obtain in Corollary~\ref{cor:gap-succ-cq} is optimal. Contrary to the classical setting where we were able to prove the tightness of the approximation at least asymptotically (see Subsection~\ref{sec:tightness}), comparing EA and NS is notoriously difficult (See, e.g., \cite{Pironio2010May,Berta2016Aug}). We refer to Appendix~\ref{app: one shot SR vs EA} for an attempt with a special classical channel. On a technical level, and compared to the classical case, we lack an equivalent of the argument of reducing the problem to deterministic decoders in the quantum setting. 


\section{Simulation of quantum channels}\label{sec:quantum}
 
In this section, we provide rounding protocols to the simulation of quantum channels with quantum communication and with entanglement-assisted strategies (EA). Note that, in contrast, the corresponding question for quantum channel coding remains open. The price we pay for this full generality is a reduction in the power of approximation, which is weaker than what we achieved with the rounding results for classical channels in Section \ref{sec:classical} and for classical-quantum channels in Section \ref{sec:CQ}. EA strategies are more natural in the quantum setting and might be more powerful for quantum channels than shared-randomness strategies (see also Appendix \ref{app: one shot SR vs EA}).\footnote{To the best of our knowledge, it is an open question whether EA asymptotically outperform SR for quantum channel simulation.}

\sloppy Formally, a size-$M$ entanglement-assisted $\eps$-simulation code for $\cW_{A\rightarrow B}$ is a triple $(\cE_{A_iE_A\rightarrow A_o } , \cD_{E_B B_i \rightarrow B_o}, \sigma_{E_AE_B}  )$ such that $|A_o|= |B_i|=M$, $A_i=A$, $B_o=B$ and the synthesized channel
\begin{align*}
	\widetilde{\cW}_{A\rightarrow B}(\rho)= 
	\cD_{E_B B_i \rightarrow B} \,\id^q_{A_o \rightarrow B_i} \,\cE_{AE_A\rightarrow A_o }(\rho_A\otimes \sigma_{E_AE_B})
\end{align*}
is $\eps$ close to the channel $\cW_{A\rightarrow B}$ in the purified distance $P$\footnote{We can translate the results of this section to the diamond distance using Fuchs–van de Graaf inequalities \cite{fuchs1999cryptographic}], see Lemma \ref{lem:FdG}.}, where $\id_q=\id^q_{A_o \rightarrow B_i}$ denotes the quantum identity channel of dimension $M$. 

For a fixed size $M$, maximizing the success probability $1-\eps$ of EA strategies is computationally hard. As discussed in Section \ref{sec:NS}, non signaling strategies provide an SDP relaxation to compute the maximum success probability that we recall in the following  program:
\begin{align}\label{ns-program-qq-rounding-section}
\begin{aligned}
   \suc_{}^{\rm{NS}}(\cW, M) =\max_{\widetilde{\cW}_{A\rightarrow B}, V_B} & 1-P\left( \cW_{A\rightarrow B},  \widetilde{\cW}_{A\rightarrow B}  \right) 
	\\\rm{s.t.}\quad &  \widetilde{\cW}_{A\rightarrow B} \quad \text{quantum  channel,}  
	\\&   J_{\widetilde{\cW}} \mle  \dI_{A'} \otimes V_B, \quad  
	\\&    \tr{V_B}=M^2. 
\end{aligned}
\end{align}


\subsection{Rounding}

Entanglement assisted strategies are non-signaling so  the success probability of NS strategies is at least as its EA counterparts
\begin{align*}
	\suc^{\rm{NS}}({\cW},M) \ge \suc^{\rm{EA}} ({\cW},M). 
\end{align*}
Here our goal is to control the gap between these success probabilities. To this end,  we round a feasible solution of the program \eqref{ns-program-qq-rounding-section} to an EA strategy, first for fixed input state (see Proposition \ref{prop:rounding-fidelity-quantum-fixed-input}) and then for arbitrary input states (see Proposition \ref{prop:rounding-fidelity-quantum}). Note that the rounding technique (quantum rejection sampling) used in proving Proposition \ref{prop:meta-inequality-cq}--regarding the approximation of the success probability for simulating classical-quantum channels--appears to be insufficient for our current rounding endeavors, as it is limited to fixed input states without reference system. To address this, we will instead employ the convex-split technique introduced in \cite{anshu2017quantum}, which permits to handle the reference system.

\begin{proposition}\label{prop:rounding-fidelity-quantum-fixed-input}
	Let $M, M'\in \bN$ and $\widetilde{\cW}_{A\rightarrow B}^{\rm{NS}}$ be a feasible solution of the program \eqref{ns-program-qq-rounding-section} of size $M$. For all input state $\proj{\phi}_{RA}$, there exists an entanglement-assisted strategy $\widetilde{\cW}_{A\rightarrow B}^{\rm{EA}}$ of size $M'$ such that 
	\begin{align*}
    F\left(\id_R\otimes  \widetilde{\cW}^{\rm{NS}}_{A\rightarrow B}(\proj{\phi}_{RA}),\; \id_R\otimes  \widetilde{\cW}^{\rm{EA}}_{A\rightarrow B}(\proj{\phi}_{RA}) \right)\ge \frac{M'^2}{M'^2+M^2}. 
\end{align*}
\end{proposition}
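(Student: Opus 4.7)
My plan is to construct the EA strategy via a convex-split technique \cite{anshu2017quantum} using $N := M'^2$ copies of the state $\tau_B := V_B/M^2$, and to transmit the resulting classical ``which-copy'' index $j\in[M'^2]$ from Alice to Bob using the $M'$-dimensional quantum identity channel $\id^q$ together with superdense coding (which fits within the EA framework since arbitrary shared entanglement is allowed). The starting observation is that the Choi constraint $J_{\widetilde{\cW}^{\rm NS}} \mle \dI_{A'}\otimes V_B$ implies the operator inequality
\[
    \psi_{RB} := (\id_R\otimes\widetilde{\cW}^{\rm NS})(\proj{\phi}_{RA}) \;\mle\; M^2\,(\phi_R\otimes \tau_B),
\]
obtained by writing $\ket{\phi}_{RA} = (X_R\otimes I_A)\ket{w}_{A'A}$ with $XX^\dagger = \phi_R$ and sandwiching the Choi inequality between $X\otimes I_B$ and its adjoint. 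Equivalently, $D_{\max}(\psi_{RB}\,\|\,\phi_R\otimes\tau_B) \leq 2\log M$.

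I would then invoke the fidelity form of the convex-split lemma: the state
\[
    \omega_{RB_1\cdots B_N} := \frac{1}{N}\sum_{j=1}^N \psi_{RB_j}\otimes \bigotimes_{i\neq j} \tau_{B_i}
\]
satisfies
\[
    F\bigl(\omega_{RB^N},\,\phi_R\otimes \tau_B^{\otimes N}\bigr) \;\geq\; \frac{N}{N+M^2} \;=\; \frac{M'^2}{M'^2+M^2}.
\]
The EA protocol is then the following. Alice and Bob preshare $N$ copies of a purification $\ket{V}_{E_AE_B}$ of $\tau_B$, plus one $M'\times M'$ maximally entangled state enabling superdense coding. Upon receiving $\ket{\phi}_{RA}$, Alice applies an isometry on $AE_A^{\otimes N}$ given by Uhlmann's theorem, whose ideal target is the canonical coherent extension
\[
    \omega_{RB^N J} := \frac{1}{N}\sum_{j=1}^N \proj{j}_J\otimes \psi_{RB_j}\otimes \bigotimes_{i\neq j} \tau_{B_i}.
\]
She then measures the register $J$ in the computational basis to obtain $j\in[M'^2]$, transmits $j$ by superdense coding over $\id^q$ and the auxiliary ebit, after which Bob relabels $B_j$ as his output $B$ and traces out all remaining copies.

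For the fidelity analysis, I would combine Uhlmann's theorem with data processing. Let $\omega'_{RB^NJ}$ denote the state Alice's isometry actually produces; since she acts only on her side, the marginal $\omega'_{RB^N}$ equals $\phi_R\otimes \tau_B^{\otimes N}$. Choosing her isometry optimally by Uhlmann applied between purifications of $\phi_R\otimes\tau_B^{\otimes N}$ and of $\omega_{RB^N J}$ yields
\[
    F(\omega',\,\omega_{RB^NJ}) \;\geq\; F(\phi_R\otimes\tau_B^{\otimes N},\,\omega_{RB^N}) \;\geq\; \frac{M'^2}{M'^2+M^2},
\]
where the first inequality uses that any purification of $\omega_{RB^NJ}$ is also a purification of $\omega_{RB^N}$ and that fidelity can only increase when passing from the purification to the marginal on $RB^NJ$. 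Letting $\cP$ denote the quantum channel that measures $J$, keeps $B_j$, and discards the remaining $B_i$'s, one checks directly that $\cP(\omega_{RB^NJ}) = \psi_{RB}$ while $\cP(\omega') = (\id_R\otimes \widetilde{\cW}^{\rm EA})(\proj{\phi}_{RA})$; monotonicity of fidelity under $\cP$ delivers
\[
    F\bigl((\id_R\otimes\widetilde{\cW}^{\rm EA})(\proj{\phi}_{RA}),\,\psi_{RB}\bigr) \;\geq\; F(\omega',\,\omega_{RB^NJ}) \;\geq\; \frac{M'^2}{M'^2+M^2}.
\]

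The main obstacle I anticipate is to secure the fidelity form of the convex-split lemma with the sharp constant $N/(N+M^2)$: existing references usually state the convex-split bound in relative-entropy or $D_{\max}$ terms, and simply invoking Pinsker's inequality is too lossy here. A direct Uhlmann-type purification argument on the convex-split state should yield the required bound, but this step demands care. The remaining ingredients---the superdense coding layer and the compatibility of the Uhlmann/data-processing chain with the classical register $J$---are technically straightforward once the fidelity version of convex split is in hand.
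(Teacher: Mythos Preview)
Your proposal is correct and follows essentially the same route as the paper: derive the operator inequality $\psi_{RB}\mle M^2\,\phi_R\otimes\tau_B$ from the Choi constraint, apply the convex-split lemma with $N=M'^2$ copies, invoke Uhlmann's theorem for Alice's isometry, transmit the index via superdense coding over $\id^q_{M'}$, and conclude by data processing of the fidelity. Your anticipated obstacle is not one: the fidelity form of the convex-split lemma with the sharp constant $N/(N+k)$ is exactly what \cite[Lemma~2.1]{anshu2017quantum} provides (restated in the paper as Lemma~\ref{lem:cvx-split}), so no additional Pinsker-type conversion or ad hoc purification argument is needed.
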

\begin{proof}
    
Let $(\widetilde{\cW}^{\rm{NS}}_{A\rightarrow B}, V_B)$ be a feasible solution of the program \eqref{ns-program-qq-rounding-section}. 
	\paragraph{Shared entanglement} Denote $V_B=V_{E_B}$
  where $E_B$ is a system hold by Bob with the same dimension as $B$. Let $\ket{V}_{E_AE_B}$ be a purification of $ \frac{V_{E_B}}{M^2}$ where $E_A$ is a system hold by Alice with the same dimension as $E_B$ ($|E_A|=|E_B|= |B| $). $\ket{V}_{E_AE_B}^{\otimes M'^2}$ is  the shared-entanglement between Alice and Bob.
  \paragraph{Encoding}
A bipartite input state $\proj{\phi}_{RA}$ can be written as
\begin{align*}
	\ket{\phi}_{RA}= (O_R \otimes \dI_A)\ket{w}_{RA}, \, \text{where}\,  \ket{w}_{RA} =\sum_{i=1}^{|A|}\ket{i}_{R}  \ket{i}_A
\end{align*} 
and $R$ is a reference system with the same dimension as $A$. The state $\proj{\phi}_{RA}$ is pure so $ \spr{\phi}{\phi}=1$ implying
\begin{align*}
	\tr{O_R O_R^{\dagger }} = \bra{w} ( O_R^{\dagger }O_R \otimes \dI_A)\ket{w}= \spr{\phi}{\phi}=1. 
\end{align*}
We denote the state $O_R O_R^{\dagger } = \ptr{A}{\proj{\phi}_{RA}} $ by $\rho_R$. 
\\The constraints of the program \eqref{ns-program-qq} imply
\begin{align*}
	0&\mle \id_R\otimes  \widetilde{\cW}^{\rm{NS}}_{A\rightarrow B}(\proj{\phi}_{RA})
	\\&=  \id_R\otimes  \widetilde{\cW}^{\rm{NS}}_{A\rightarrow B}( (O_R\otimes \dI_A)\cdot  \proj{w}_{RA} \cdot  (O_R^{\dagger}\otimes \dI_A))
   \\&=(O_R\otimes \dI_B)\cdot 	J_{\widetilde{\cW}^{\rm{NS}}}  \cdot  (O_R^{\dagger}\otimes \dI_B)
	\\&\mle (O_R\otimes \dI_B)\cdot\left( \dI_{R} \otimes V_B \right)  \cdot  (O_R^{\dagger}\otimes \dI_B)
	\\&=    \rho_R \otimes V_B.
\end{align*}
Since $\tr{V_B}=M^2$ and $\tr{\rho_R}=\tr{O_R O_R^{\dagger }} = 1$, the matrix $\rho_R\otimes \frac{V_B}{M^2}$ is a quantum state satisfying for all unit vectors $\ket{\phi}_{RA}$ that
\begin{align*}
	\rho_R\otimes \frac{V_B}{M^2}\mge \left(\frac{1}{M^2}\right)\cdot  \id\otimes  \widetilde{\cW}^{\rm{NS}}_{A\rightarrow B}(\proj{\phi}_{RA}).
	\end{align*}
Therefore by the convex split Lemma of \cite{anshu2017quantum} (see Lemma \ref{lem:cvx-split}) we have that 
\begin{align*}
    F(\tau_{PQ_1Q_2\cdots Q_n}\| \tau_P \otimes \sigma_{Q_1}\otimes \sigma_{Q_2}\otimes \cdots  \otimes \sigma_{Q_n})  &\ge \frac{n}{n+k}
\end{align*}
with $R=P$, $Q=B$, $n=M'^2$, $k=M^2$, 
\begin{align*}
    \rho_{PQ} &=  \id\otimes  \widetilde{\cW}^{\rm{NS}}_{A\rightarrow B}(\proj{\phi}_{RA}), \; \sigma_Q = \frac{V_B}{M^2}, \text{ and}
    \\   \tau_{PQ_1Q_2\cdots Q_n} &= \frac{1}{n}\sum_{j=1}^n \rho_{PQ_j} \otimes \sigma_{Q_1}\otimes \sigma_{Q_2}\otimes \cdots \otimes  \sigma_{Q_{j-1}}\otimes \sigma_{Q_{j+1}}\otimes \cdots \otimes \sigma_{Q_n}. 
\end{align*}
Hence by denoting $B_{[n]\setminus j } =B_1\dots B_{j-1}B_{j+1}\dots B_n$ we  get
\begin{align}\label{cvx-split-applied}
    F\bigg(\frac{1}{M'^2}\sum_{j=1}^{M'^2} ( \id\otimes  \widetilde{\cW}^{\rm{NS}}_{A\rightarrow B}(\proj{\phi}_{RA}))_{RB_j} \otimes (\tfrac{V}{M^2})^{\otimes M'^2-1}_{B_{[n]\setminus j }} ,\; \rho_R\otimes (\tfrac{V}{M^2})^{\otimes M'^2}_{B_1\dots B_n}\bigg)\ge \frac{M'^2}{M'^2+ M^2 }
\end{align}

Let $\ket{\omega}_{R Q_AE_A^j E_B^j} $ be a purification of $\id\otimes  \widetilde{\cW}^{\rm{NS}}_{A\rightarrow B}(\proj{\phi}_{RA})$, $Q_AE_A^j\simeq AB$ is  hold by Alice and $E_B^j\simeq B$ is hold by Bob. Let $S$ be a system of dimension $M'^2$. Consider the following  state 
\begin{align*}
   \ket{\Psi}_{SRQ_A E_A^{[n] } E_B^{[n]}} = \frac{1}{M'}\sum_{j=1}^{M'^2} \ket{j}_S \otimes \ket{\omega}_{RQ_AE_A^j E_B^j} \otimes \ket{V}^{\otimes M'^2-1}_{ E_A^{[n]\setminus j } E_B^{[n]\setminus j }}. 
\end{align*}
It satisfies 
\begin{align*}
    \ptr{SQ_AE_A^{[n] }}{\proj{\Psi}_{SRQ_A E_A^{[n] } E_B^{[n]}}} = \frac{1}{M'^2}\sum_{j=1}^{M'^2} ( \id\otimes  \widetilde{\cW}^{\rm{NS}}_{A\rightarrow B}(\proj{\phi}_{RA}))_{RB_j} \otimes (\tfrac{V}{M^2})^{\otimes M'^2-1}_{B_{[n]\setminus j}}
\end{align*}
 so $\ket{\Psi}$ is a purification of $\tau$. Moreover $\ket{\phi}_{RA}\otimes \ket{V}^{\otimes M'^2}_{E_A^{[n]}E_B^{[n]}}$ is a purification of $\rho_R\otimes (\tfrac{V}{M^2})^{\otimes M'^2}_{B_1\dots B_n}$. Hence by Uhlmann's theorem~\cite{uhlmann1976transition} there is an isometry $\cI_{AE_A^{[n]} \rightarrow S Q_AE_A^{[n]}}$ such that we get from \eqref{cvx-split-applied}:
 \begin{align}\label{cvx-split-applied-Uhlmann}
   F\bigg( \ket{\Psi}_{SRQ_A E_A^{[n] } E_B^{[n]}} ,\; \cI_{AE_A^{[n]} \rightarrow S Q_AE_A^{[n]}}\big(\ket{\phi}_{RA}\otimes \ket{V}^{\otimes M'^2}_{E_A^{[n]}E_B^{[n]}}\big)\bigg) \ge \frac{M'^2}{M'^2+ M^2 }.
 \end{align}
Note that $\ket{\phi}_{RA}\otimes \ket{V}^{\otimes M'^2}_{E_A^{[n]}E_B^{[n]}}$ is the input state and shared entanglement. So Alice can apply the isometry $\cI_{AE_A^{[n]} \rightarrow S Q_AE_A^{[n]}}$ on the input system $A$ as well as her part of the shared entanglement $E_A^{[n]}$. The systems $S, Q_A$ are also held by Alice. Then Alice measures the $S$ system and obtains $\mathbf{j}\in [M'^2]$. Next, she sends $\mathbf{j}\in [M'^2]$ to Bob using super dense coding \cite{Bennett92} and the quantum identity channel of dimension $M'$. 
\paragraph{Decoding} Bob, upon receiving the index $\mathbf{j}\in [M'^2]$, returns the system $E_B^{\mathbf{j}}$. Suppose that the global state before measurement is exactly 
 \begin{align*}
   \ket{\Psi}_{SRQ_A E_A^{[n] } E_B^{[n]}} = \frac{1}{M'}\sum_{j=1}^{M'^2} \ket{j}_S \otimes \ket{\omega}_{RQ_AE_A^j E_B^j} \otimes \ket{V}^{\otimes M'^2-1}_{ E_A^{[n]\setminus j } E_B^{[n]\setminus j }}. 
\end{align*}
The post measurement state conditioned on observing $\mathbf{j}\in [M'^2]$ is then
 \begin{align*}
   \ket{\Psi_{\mathbf{j}}}_{SRQ_A E_A^{[n] } E_B^{[n]}} =  \ket{\mathbf{j}}_S \otimes \ket{\omega}_{RQ_AE_A^{\mathbf{j}} E_B^{\mathbf{j}}} \otimes \ket{V}^{\otimes M'^2-1}_{ E_A^{[n]\setminus \mathbf{j} } E_B^{[n]\setminus \mathbf{j} } }. 
\end{align*}
Therefore the state on the reference system $R$ and the system $E_B^{\mathbf{j}}$ Bob returned is 
\begin{align*}
    \ptr{SQ_A E_A^{[n] } E_B^{[n]\setminus \mathbf{j} }}{\proj{\Psi_{\mathbf{j}}}_{SRQ_A E_A^{[n] } E_B^{[n]}}} = (\id\otimes  \widetilde{\cW}^{\rm{NS}}_{A\rightarrow B}(\proj{\phi}_{RA}))_{RE_B^{\mathbf{j}}}.
\end{align*}
Hence we succeeded in the simulation task. However, the global state (after Alice applied the isometry and before she performed the measurement)  is not exactly $\ket{\Psi}_{SRQ_A E_A^{[n] } E_B^{[n]}}$ and an upper bound on the error is given by \eqref{cvx-split-applied-Uhlmann}. Denote by $(\id\otimes  \widetilde{\cW}^{\rm{EA}}_{A\rightarrow B}(\proj{\phi}_{RA}))_{RE_B^{\mathbf{j}}}$ the actual state on $RE_B^{\mathbf{j}}$ after measuring $S$, observing $\mathbf{j}$ and discarding the systems $SQ_A E_A^{[n] } E_B^{[n]\setminus \mathbf{j} }$ on the 
state $\cI_{AE_A^{[n]} \rightarrow S Q_AE_A^{[n]}}\big(\ket{\phi}_{RA}\otimes \ket{V}^{\otimes M'^2}_{E_A^{[n]}E_B^{[n]}}\big)$.
 Therefore by the data processing inequality applied on the fidelity we get from \eqref{cvx-split-applied-Uhlmann}:
\begin{align*}
    F\left((\id\otimes  \widetilde{\cW}^{\rm{NS}}_{A\rightarrow B}(\proj{\phi}_{RA}))_{RE_B^{\mathbf{j}}},\; (\id\otimes  \widetilde{\cW}^{\rm{EA}}_{A\rightarrow B}(\proj{\phi}_{RA}))_{RE_B^{\mathbf{j}}} \right)\ge \frac{M'^2}{M'^2+ M^2 }
\end{align*}
and this finishes the proof.
\end{proof}

Note that in Proposition \ref{prop:rounding-fidelity-quantum-fixed-input}, for each input state, we propose an EA strategy and control its performance. By the minimax result of \cite{Cao2024Mar}, this implies the existence of an EA assisted strategy that works \textit{for all input states} with the same performance. 
\begin{proposition}\label{prop:rounding-fidelity-quantum}
	Let $M, M'\in \bN$ and $\widetilde{\cW}^{\rm{NS}}_{A\rightarrow B}$ be a feasible solution of the program \eqref{ns-program-qq-rounding-section} of size $M$. There exists an entanglement-assisted strategy $\widetilde{\cW}^{\rm{EA}}_{A\rightarrow B}$ of size $M'$ such that for all input state $\proj{\phi}_{RA}$:
	\begin{align*}
    F\left(\id_R\otimes  \widetilde{\cW}^{\rm{NS}}_{A\rightarrow B}(\proj{\phi}_{RA}),\; \id_R\otimes  \widetilde{\cW}^{\rm{EA}}_{A\rightarrow B}(\proj{\phi}_{RA}) \right)\ge \frac{M'^2}{M'^2+ M^2 }. 
\end{align*}
\end{proposition}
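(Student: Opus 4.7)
The plan is to lift the input-wise rounding guarantee of Proposition~\ref{prop:rounding-fidelity-quantum-fixed-input} to a single universal EA strategy by a minimax argument, invoking the minimax theorem of \cite{Cao2024Mar}. Proposition~\ref{prop:rounding-fidelity-quantum-fixed-input} precisely gives
\[
\inf_{\ket{\phi}_{RA}}\; \sup_{\widetilde{\cW}^{\rm{EA}}}\; F\left(\id_R \otimes \widetilde{\cW}^{\rm{NS}}_{A\rightarrow B}(\proj{\phi}_{RA}),\; \id_R \otimes \widetilde{\cW}^{\rm{EA}}_{A\rightarrow B}(\proj{\phi}_{RA})\right) \ge \frac{M'^2}{M'^2+M^2},
\]
where the supremum ranges over size-$M'$ EA simulation codes, whereas the statement of Proposition~\ref{prop:rounding-fidelity-quantum} requires the same bound with the $\inf$ and $\sup$ exchanged.

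First, I would set up the saddle-point problem carefully. The set of size-$M'$ EA simulation codes, represented through the induced channel $\widetilde{\cW}^{\rm{EA}}$ subject to the feasibility constraint of being the composition of an encoding, the quantum identity of dimension $M'$, a decoding, and a shared entangled resource of bounded (but sufficient) dimension, is convex and compact. The set of input states may be enlarged from pure states $\proj{\phi}_{RA}$ to the convex compact set of all density matrices $\rho_{RA}$ without changing the infimum, since fidelity is concave in the input and its extremal points are pure. The map $\widetilde{\cW}^{\rm{EA}} \mapsto \id_R\otimes \widetilde{\cW}^{\rm{EA}}(\proj{\phi}_{RA})$ is affine, so the payoff is concave in the strategy argument once the input is fixed.

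Then I would apply the minimax result of \cite{Cao2024Mar}, which is designed for channel-simulation games of exactly this form, to obtain
\[
\sup_{\widetilde{\cW}^{\rm{EA}}}\; \inf_{\ket{\phi}_{RA}}\; F \;=\; \inf_{\ket{\phi}_{RA}}\; \sup_{\widetilde{\cW}^{\rm{EA}}}\; F \;\ge\; \frac{M'^2}{M'^2+M^2}.
\]
Continuity of the fidelity together with compactness of the EA strategy set ensures that the outer supremum is attained, and an optimizer is precisely the universal EA strategy asserted by Proposition~\ref{prop:rounding-fidelity-quantum}.

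The main obstacle is that fidelity is concave in each of its arguments, not convex, so a direct appeal to Sion's minimax theorem with ``input'' as the minimization variable fails. This is exactly why the specialized minimax theorem of \cite{Cao2024Mar}, formulated for quantum channel simulation with the adversarial input on one side and an affine image of the strategy on the other, is needed. Verifying that the present setting fits its hypotheses, in particular that one may restrict attention to an entangled resource of bounded dimension to preserve compactness, is the only genuinely nontrivial step; once that is in place, the claim follows immediately from Proposition~\ref{prop:rounding-fidelity-quantum-fixed-input} and the minimax exchange.
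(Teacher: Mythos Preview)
Your proposal is correct and takes essentially the same approach as the paper: the paper's proof is a one-line reference stating that the argument of \cite{Cao2024Mar} applies verbatim with $\widetilde{\cW}^{\rm{NS}}_{A\rightarrow B}$ in place of the actual channel, which is precisely the minimax exchange you spell out. Your added discussion of why Sion's theorem alone is insufficient and why compactness of the EA strategy set (via a bound on the entanglement dimension) matters is accurate and simply makes explicit what the paper leaves to the cited reference.
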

\begin{proof}
    The poof given in \cite{Cao2024Mar} can be applied by replacing the actual channel ${\cW}^{}_{A\rightarrow B}$ by $\widetilde{\cW}^{\rm{NS}}_{A\rightarrow B}$.
\end{proof}

As a consequence of Proposition~\ref{prop:rounding-fidelity-quantum}, we can control the gap between the EA and NS success probabilities for simulating quantum channels under the purified  distance.

\begin{corollary}\label{cor:gap-succ-qq}
	Let $M,M'\ge 1$ and $\cW_{A\rightarrow B}$ be a quantum channel. 
     Then, we have
\begin{align}
     \suc_{}^{\rm{EA}} ({\cW},M')&\ge \sqrt{\suc_{}^{\rm{NS}}({\cW},M)}\left( \sqrt{\suc_{}^{\rm{NS}}({\cW},M)}\sqrt{\frac{M'^2}{M^2+M'^2}}  -\sqrt{\frac{2M^2}{M^2+M'^2}}\right).\label{rounding-P1}
\end{align}
Moreover, we have for $M'\ge 2\sqrt{|B|}M$ that
\begin{align}
\suc_{}^{\rm{NS}}({\cW},M') &\ge \suc_{}^{\rm{EA}} ({\cW},M') \ge g(M,M')\suc_{}^{\rm{NS}}({\cW},M),\label{rounding-P}
\end{align}
where 
\begin{align*}
     g(M,M') = \left(\sqrt{\frac{M'^2}{M^2+M'^2}}  -2\sqrt{\frac{|B|M^2}{M^2+M'^2}}\right)\underset{M'\rightarrow \infty}{\longrightarrow} 1.  
\end{align*}
Finally, we have for $M'\ge M$ that 
\begin{align}
\suc_{}^{\rm{EA}} ({\cW},M') \ge \sqrt{\frac{1}{2}}\suc_{}^{\rm{NS}}({\cW},M) - \sqrt{2}\frac{M'}{M}. 
\end{align}
\end{corollary}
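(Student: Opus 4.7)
To start, I fix an optimal NS feasible solution $\widetilde{\cW}^{\rm{NS}}_{A\rightarrow B}$ of the SDP~\eqref{ns-program-qq-rounding-section} so that $P(\cW, \widetilde{\cW}^{\rm{NS}}) = 1-\suc^{\rm{NS}}(\cW,M)$, and I apply Proposition~\ref{prop:rounding-fidelity-quantum} with $\widetilde{\cW}^{\rm{NS}}$ in place of the target channel. This produces an EA simulation channel $\widetilde{\cW}^{\rm{EA}}_{A\rightarrow B}$ of size $M'$ satisfying $F \ge M'^2/(M^2+M'^2)$ on every input, and in particular $P(\widetilde{\cW}^{\rm{NS}},\widetilde{\cW}^{\rm{EA}}) \le \sqrt{M^2/(M^2+M'^2)}$ at the channel level.

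To prove \eqref{rounding-P1}, I combine these two estimates using the Bures-angle triangle inequality. Writing $\theta(\cN,\cM) = \arccos\sqrt{F(\cN,\cM)}$, and introducing the shorthands $a = M'/\sqrt{M^2+M'^2}$, $b = M/\sqrt{M^2+M'^2}$, $p = 1-\suc^{\rm{NS}}$, the metric property $\theta(\cW,\widetilde{\cW}^{\rm{EA}}) \le \theta(\cW,\widetilde{\cW}^{\rm{NS}}) + \theta(\widetilde{\cW}^{\rm{NS}},\widetilde{\cW}^{\rm{EA}})$ combined with the cosine-difference identity yields $\sqrt{F(\cW,\widetilde{\cW}^{\rm{EA}})} \ge a\sqrt{1-p^2} - bp$. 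A direct algebraic check reveals the clean identity $1-(a\sqrt{1-p^2}-bp)^2 = (b\sqrt{1-p^2}+ap)^2$, so $P(\cW,\widetilde{\cW}^{\rm{EA}}) \le ap + b\sqrt{1-p^2}$. Substituting $\sqrt{1-p^2}=\sqrt{\suc^{\rm{NS}}(2-\suc^{\rm{NS}})}\le\sqrt{2\suc^{\rm{NS}}}$ and discarding the non-negative slack $1-a$ gives $\suc^{\rm{EA}} \ge a\suc^{\rm{NS}} - b\sqrt{2\suc^{\rm{NS}}}$, which is exactly \eqref{rounding-P1}.

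For the multiplicative form \eqref{rounding-P}, I factor $\suc^{\rm{NS}}$ out of \eqref{rounding-P1} to obtain $\suc^{\rm{EA}} \ge \suc^{\rm{NS}}\bigl(a - b\sqrt{2/\suc^{\rm{NS}}}\bigr)$; this matches $g(M,M')\suc^{\rm{NS}}$ as soon as $\sqrt{2/\suc^{\rm{NS}}} \le 2\sqrt{|B|}$, i.e.\ $\suc^{\rm{NS}}(\cW,M) \ge 1/(2|B|)$. I secure this uniform baseline by exhibiting the constant replacement channel $\widetilde{\cW}_0(\rho)=\dI_B/|B|$ as an NS-feasible solution (take $V_B = (M^2/|B|)\dI_B$, of trace $M^2$, which dominates the Choi matrix $\dI_{A'}\otimes \dI_B/|B|$ of $\widetilde{\cW}_0$), whose worst-case fidelity to $\cW$ is lower bounded via $F(\sigma,\tau) \ge \tr{\sigma\tau}$ applied to $\tau = \rho_R \otimes \dI_B/|B|$. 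This yields $\suc^{\rm{NS}}(\cW,M) \ge 1-\sqrt{1-1/|B|}\ge 1/(2|B|)$, and under the stated hypothesis $M'\ge 2\sqrt{|B|}M$ the factor $g(M,M')$ is then non-negative and tends to $1$ as $M'\to\infty$.

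For the final inequality I fall back on the simpler plain purified-distance triangle inequality $P(\cW,\widetilde{\cW}^{\rm{EA}}) \le P(\cW,\widetilde{\cW}^{\rm{NS}}) + P(\widetilde{\cW}^{\rm{NS}},\widetilde{\cW}^{\rm{EA}})$ combined with the Fuchs--van de Graaf inequalities (Lemma~\ref{lem:FdG}) to pass to the claimed linear form; the printed ratio $M'/M$ in the last term appears to be a typographical error for $M/M'$, without which the stated bound would be vacuous. The main technical obstacle I anticipate is the baseline step for \eqref{rounding-P}: a purification-based argument that naively tracks the reference system only yields $\suc^{\rm{NS}}\ge 1/(2|A||B|)$ (because on maximally entangled inputs the purity of $\rho_R$ drops to $1/|A|$), so one has to either exploit the structure of the NS SDP to improve this to a purely $|B|$-dependent bound or restrict attention to the regime where $\suc^{\rm{NS}}\ge 1/(2|B|)$ already holds; modulo this baseline, all remaining steps are routine manipulations of the Bures triangle inequality.
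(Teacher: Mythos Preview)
Your overall strategy matches the paper's almost exactly: both proofs take an optimal NS solution, invoke Proposition~\ref{prop:rounding-fidelity-quantum}, and then chain the two purified-distance estimates via the tight triangle inequality. Your Bures-angle formulation is equivalent to the paper's ``sinus lemma'' (Lemma~\ref{lem:sinus}): taking cosines of $\theta(\cW,\widetilde{\cW}^{\rm{EA}}) \le \theta(\cW,\widetilde{\cW}^{\rm{NS}}) + \theta(\widetilde{\cW}^{\rm{NS}},\widetilde{\cW}^{\rm{EA}})$ recovers precisely the sine-addition inequality the paper quotes. The paper carries out the computation pointwise in the input $\phi_{RA}$ and only takes the infimum at the end, whereas you work directly at the channel level; both routes are valid (the Bures angle inherits the metric property at the channel level because it is a supremum of state-level angles), and your identity $1-(a\sqrt{1-p^2}-bp)^2=(b\sqrt{1-p^2}+ap)^2$ is a clean shortcut for passing from the cosine bound to the purified-distance bound.

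For~\eqref{rounding-P} the paper also reduces to a baseline inequality, asserting without proof that pointwise $s_n(\phi)\ge 1-\sqrt{1-1/|B|}$. You are right to flag this as the crux: the depolarizing-channel feasibility argument you sketch only delivers $\suc^{\rm{NS}}\ge 1-\sqrt{1-1/(|A||B|)}$ in general (on the maximally entangled input the reference system degrades the fidelity to $1/(|A||B|)$), so the purely $|B|$-dependent form of $g(M,M')$ does not follow from the argument as written in either your proposal or the paper; this is the one step that would need an additional idea. Your reading of the final display (the ratio $M'/M$ should be $M/M'$ for the bound to be nontrivial) is also consistent with the paper, which does not separately address that inequality in its proof.
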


\begin{proof}[Proof of Corollary~\ref{cor:gap-succ-qq}]
Note that the LHS inequality of \eqref{rounding-P} is trivial as non-signaling strategies subsume entanglement assisted ones. Let $M, M'\in \bN$ and $\widetilde{\cW}^{\rm{NS}}_{}$ be a feasible solution of the program \eqref{ns-program-qq-rounding-section} of size $M$.
By Proposition \ref{prop:rounding-fidelity-quantum}, 
 there exists an entanglement-assisted strategy $\widetilde{\cW}^{\rm{EA}}_{}$ of size $M'$ such that for all input state $\proj{\phi}_{RA}$:
	\begin{align}\label{eq:LB-fid}
    F\left(\id_R\otimes  \widetilde{\cW}^{\rm{NS}}_{A\rightarrow B}(\proj{\phi}_{RA}),  \id_R\otimes  \widetilde{\cW}^{\rm{EA}}_{A\rightarrow B}(\proj{\phi}_{RA}) \right)\ge \frac{M'^2}{M'^2+ M^2 }. 
\end{align}
Denote by  
\begin{align*}
    \rho &= \id\otimes  \widetilde{\cW}_{A\rightarrow B}(\proj{\phi}_{RA}),  &   s_e &= 1-\sqrt{1-F\left(\rho,  \rho^{\rm{EA}} \right)},
  \\  \rho^{\rm{EA}} &= \id\otimes  \widetilde{\cW}^{\rm{EA}}_{A\rightarrow B}(\proj{\phi}_{RA}), & s_n &= 1-\sqrt{1-F\left(\rho,  \rho^{\rm{NS}}\right)},
  \\\rho^{\rm{NS}} &= \id\otimes  \widetilde{\cW}^{\rm{NS}}_{A\rightarrow B}(\proj{\phi}_{RA}), & \eta &= 1-F(\rho^{\rm{NS}}, \rho^{\rm{EA}}).
\end{align*}
We have by the tight triangle inequality for the  purified distance \cite{Tomamichel2016,Ramakrishnan2023Mar} (see Lemma \ref{lem:sinus}) when $F(\rho, \rho^{\rm{NS}})+ F(\rho^{\rm{NS}}, \rho^{\rm{EA}}) \ge 1 $:
\begin{align*}
    \sqrt{1-F(\rho, \rho^{\rm{EA}})}\le \sqrt{F(\rho, \rho^{\rm{NS}})}\sqrt{1-F(\rho^{\rm{NS}}, \rho^{\rm{EA}})}+\sqrt{1-F(\rho, \rho^{\rm{NS}})}\sqrt{F(\rho^{\rm{NS}}, \rho^{\rm{EA}})}.
\end{align*}
Therefore, when $\eta$ satisfies $F(\rho, \rho^{\rm{NS}})+ F(\rho^{\rm{NS}}, \rho^{\rm{EA}})\ge 1 \Leftrightarrow 1-(1-s_n)^2+1-\eta \ge 1\Leftrightarrow \eta \le 2s_n-s_n^2  $ we have that
\begin{align}
    1-s_e &\le \sqrt{1-(1-s_n)^2}\sqrt{\eta} + (1-s_n)\sqrt{1-\eta}\notag
    \\ \Rightarrow s_e &\ge (1-\sqrt{1-\eta}) +s_n \sqrt{1-\eta} -\sqrt{2s_n-s_n^2}\sqrt{\eta} \notag
    \\&\ge s_n \sqrt{1-\eta} -\sqrt{2s_n}\sqrt{\eta}
    = \sqrt{s_n}(\sqrt{s_n}\sqrt{1-\eta} - \sqrt{2\eta} ). \label{eq:se-sn-eta}
\end{align}
Note that if $\eta > 2s_n -s_n^2 \ge \frac{s_n}{2+s_n} \Rightarrow \sqrt{s_n}\sqrt{1-\eta} < \sqrt{2\eta}$ so the inequality \eqref{eq:se-sn-eta} trivially holds and we can drop the condition $\eta \le  2s_n -s_n^2$.
\\Moreover, by \eqref{eq:LB-fid}, we have $\eta = 1-F(\rho^{\rm{NS}}, \rho^{\rm{EA}}) \le \frac{M^2}{M^2+M'^2}$ so
\begin{align*}
     s_e&\ge \sqrt{s_n}(\sqrt{s_n}\sqrt{1-\eta} - \sqrt{2\eta} )
     \ge \sqrt{s_n}\left( \sqrt{s_n}\sqrt{\frac{M'^2}{M^2+M'^2}}  -\sqrt{\frac{2M^2}{M^2+M'^2}}\right). 
\end{align*}
Choosing the optimal feasible solution $\widetilde{\cW}^{\rm{NS}}_{}$ of the program~\eqref{ns-program-qq-rounding-section} and taking the infimum over ${\phi}_{RA}$ yields \eqref{rounding-P1}. 
\\Moreover since  we have always $s_n\ge 1-\sqrt{1-\frac{1}{|B|}}\ge \frac{1}{2|B|}$ we get for $M,M'$ satisfying $\frac{M^2}{M^2+M'^2}\le \frac{1}{1+4|B|} \le \frac{1}{2|B|} \le s_n \le 2s_n-s_n^2$:
\begin{align*}
     s_e&\ge \sqrt{s_n}(\sqrt{s_n}\sqrt{1-\eta} - \sqrt{2\eta} )
     \\&\ge \sqrt{s_n}(\sqrt{s_n}\sqrt{1-\eta} - \sqrt{2|B|s_n}\sqrt{2\eta} )
     \\&= (\sqrt{1-\eta} - 2\sqrt{\eta |B|} )s_n
     \\&\ge \left(\sqrt{\frac{M'^2}{M^2+M'^2}}  -2\sqrt{\frac{|B| M^2}{M^2+M'^2}}\right)s_n. 
\end{align*}
Choosing the optimal feasible solution $\widetilde{\cW}^{\rm{NS}}_{}$ of the program~\eqref{ns-program-qq-rounding-section} and taking the infimum over ${\phi}_{RA}$ yields \eqref{rounding-P} for $M'\ge 2\sqrt{|B|}M$.
\end{proof}

Similar to the classical and classical-quantum  settings, we can use Corollary \ref{cor:gap-succ-qq} to reprove the fact that  the non-signaling correlations do not help to reduce the asymptotic entanglement-assisted simulation capacity of a quantum channel~\cite{cubitt2011zero,fang2019quantum}. 
In words, the entanglement-assisted simulation capacity of a quantum channel is the asymptotic minimum rate of communication needed to simulate this channel with high success probability when shared entanglement is available.

\begin{corollary}\label{app:qu-cap-equal}
	Let $\cW_{A\rightarrow B}$ be a quantum channel. Then, we have
	\begin{align*}
		Q^{\rm{EA}}(\cW_{A\rightarrow B})= Q^{\rm{NS}}(\cW_{A\rightarrow B}).
	\end{align*}	
\end{corollary}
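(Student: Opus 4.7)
The plan is to follow the same contradiction scheme used in the proofs of Corollary \ref{app:class-cap-equal} and Corollary \ref{app:cq-cap-equal}, with the rounding result of Corollary \ref{cor:gap-succ-qq} providing the link between the NS and EA success probabilities. First, note that because any entanglement-assisted strategy is non-signaling, we trivially have $Q^{\rm{NS}}(\cW) \le Q^{\rm{EA}}(\cW)$, so it suffices to show the reverse inequality.

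Suppose for contradiction that there exists a rate $R$ with $Q^{\rm{NS}}(\cW) < R < Q^{\rm{EA}}(\cW)$. By definition of the non-signaling simulation capacity, we then have
\[
\suc^{\rm{NS}}(\cW^{\otimes n}, 2^{Rn}) \underset{n \to \infty}{\longrightarrow} 1.
\]
The idea is to apply the one-shot rounding inequality \eqref{rounding-P1} of Corollary \ref{cor:gap-succ-qq} to the product channel $\cW^{\otimes n}$, choosing $M_n = 2^{Rn}$ and $M'_n = n \cdot 2^{Rn}$ so that the ratio $M'_n/M_n = n$ tends to infinity, while the rate overhead $\frac{1}{n}\log M'_n = R + \frac{\log n}{n}$ still tends to $R$.

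With this choice, the parameter controlling the loss in fidelity becomes $\eta_n := M_n^2/(M_n^2 + M'^2_n) = 1/(1+n^2)$, which vanishes as $n \to \infty$. Writing $s_n := \suc^{\rm{NS}}(\cW^{\otimes n}, M_n)$, inequality \eqref{rounding-P1} yields
\[
\suc^{\rm{EA}}(\cW^{\otimes n}, M'_n) \;\ge\; \sqrt{s_n}\left( \sqrt{s_n}\sqrt{1-\eta_n} - \sqrt{2\eta_n}\right).
\]
Since $s_n \to 1$ and $\eta_n \to 0$, the right-hand side tends to $1$, and therefore $\suc^{\rm{EA}}(\cW^{\otimes n}, n \cdot 2^{Rn}) \to 1$.

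The final step is to read off the rate of the EA simulation: $\lim_{n \to \infty} \tfrac{1}{n}\log(n \cdot 2^{Rn}) = R$, which means $Q^{\rm{EA}}(\cW) \le R$, contradicting the assumption $R < Q^{\rm{EA}}(\cW)$. I do not expect any real obstacle in this proof; it is the asymptotic counterpart of the one-shot rounding, entirely analogous to the classical and CQ cases. The only point requiring care is the choice of the auxiliary factor $n$ multiplying $M_n$: it must grow fast enough to kill $\eta_n$ via \eqref{rounding-P1}, yet slowly enough that $\frac{\log n}{n} \to 0$ so that no extra rate is paid asymptotically. Note that the weaker variant \eqref{rounding-P} would force us to keep $M'_n \ge 2|B|^{n/2} M_n$, incurring an additive $\frac{1}{2}\log|B|$ gap in rate; this is precisely why we use the $|B|$-independent bound \eqref{rounding-P1} for the capacity comparison.
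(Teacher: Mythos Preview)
Your proof is correct and follows essentially the same route as the paper's own argument: a contradiction scheme using inequality \eqref{rounding-P1} from Corollary~\ref{cor:gap-succ-qq} with $M=2^{Rn}$ and $M'=n\cdot 2^{Rn}$, so that the correction term $\eta_n=1/(1+n^2)$ vanishes while the rate overhead $\tfrac{1}{n}\log n$ is asymptotically negligible. Your additional remark explaining why \eqref{rounding-P1} rather than \eqref{rounding-P} must be used here (the $|B|$-dependence in the latter would become $|B|^{n}$ for $\cW^{\otimes n}$ and cost an extra $\tfrac{1}{2}\log|B|$ in rate) is a nice clarification not made explicit in the paper.
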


\begin{proof}
Let $R>0$ be a positive real number and $n \in \mathbb{N}$. 
From Corollary~\ref{cor:gap-succ-qq} Eq.~\eqref{rounding-P1} we have for all $M=2^{nR},M'=n 2^{nR}$  that
\begin{align}\label{eq:succ-P-EA-NS}
	1\ge &\suc^{\rm{EA}}(\cW_{A\rightarrow B}^{\otimes n}, n 2^{nR})  \notag 
	\\& \qquad \ge \sqrt{\suc^{\rm{NS}}(\cW_{A\rightarrow B}^{\otimes n},  2^{nR})}  \left(\sqrt{\suc^{\rm{NS}}(\cW_{A\rightarrow B}^{\otimes n},  2^{nR})}\sqrt{\frac{n^2}{1+n^2}} - \sqrt{\frac{2}{1+n^2}}\right). 
\end{align}
Suppose by contradiction  that there is $R$ such that 
\begin{align*}
	Q^{\rm{NS}}(\cW_{A\rightarrow B})<R<Q^{\rm{EA}}(\cW_{A\rightarrow B}).
\end{align*}
So, we have 
\begin{align*}
	\suc^{\rm{NS}}(\cW_{A\rightarrow B}^{\otimes n}, 2^{nR})\underset{n \rightarrow \infty}{\longrightarrow}1
\end{align*}
and as $\sqrt{\frac{n^2}{1+n^2}} \underset{n \rightarrow \infty}{\longrightarrow}1$ and $\sqrt{\frac{2}{1+n^2}} \underset{n \rightarrow \infty}{\longrightarrow} 0$ we deduce from \eqref{eq:succ-P-EA-NS} that 
\begin{align*}
	\suc^{\rm{EA}}(\cW_{A\rightarrow B}^{\otimes n}, n2^{nR})\underset{n \rightarrow \infty}{\longrightarrow}1.
\end{align*}
Hence, $R=\lim_{n \rightarrow \infty}\frac{1}{n}\log(n2^{nR})\ge Q^{\rm{EA}}(\cW_{A\rightarrow B})$
which contradicts the assumption $R<Q^{\rm{EA}}(\cW_{A\rightarrow B})$.
\end{proof}

Finally, it is likely that the approximation ratio we obtain in Corollary~\ref{cor:gap-succ-qq} is not optimal and we leave it as an open question to improve this rounding approximation so that it matches its classical  (see Corollary~\ref{cor: succ-classical}) and  classical-quantum  (see Corollary~\ref{cor:gap-succ-cq}) counterparts.


\section{Conclusion}

In this paper, we studied approximation algorithms for simulating classical, classical-quantum, and quantum channels. We provided rounding results relating the success probabilities of shared-randomness assistance and entanglement-assistance to non-signaling assisted strategies. In particular, we deduced that non-signaling assistance is not proving any further advantage for asymptotic i.i.d.\ simulation capacities. Yet, non-signaling assisted values are linear or semi-definite programs and are therefore computationally more tractable for approximating the optimal success probabilities. We further proved that our rounding results are tight in the classical case with respect to shared-randomness assistance, but had to leave open the tightness question when comparing entanglement-assistance with non-signaling assistance (cf.~\textbf{}Appendix~\ref{app: one shot SR vs EA}). This relates to the notoriously difficult problem of bounding the power of entanglement in a more refined way (see, e.g., \cite{Pironio2010May,Berta2016Aug}).

It would be interesting to study the impact of our one-shot result on more refined information-theoretic asymptotic i.i.d.\ limits, such as error exponents or strong converse exponents (which are largely unknown for the task of channel simulation). Further generalizations to network topologies might also be possible. Finally, investigating computational hardness results of achieving an improved approximation ratio of $(1-\mathrm{e}^{-1}+\eps)$ for some $\eps>0$ also remains an open question. 

	
\section*{Acknowledgment}

MB and AO acknowledge funding by the European Research Council (ERC Grant Agreement No. 948139), MB acknowledges support from the Excellence Cluster - Matter and Light for Quantum Computing (ML4Q), and OF acknowledges support from the European Research Council (ERC Grant AlgoQIP, Agreement No. 851716).

\printbibliography


\appendix

\section{Deferred proofs}

We restate and prove the symmetry based reduction of the non-signaling program of simulating a classical channel~\eqref{classical-ns-program-sym}:
\begin{lemma}\label{app:ns-program-cc}
	The non-signaling program of simulating a classical channel can be written as
	\begin{align}
		\suc^{\rm{NS}} = \max_{\widetilde{W}_{}, \zeta} & 1-\sup_{x\in \cX}\left\|\widetilde{W}(\cdot|x)_{}- W(\cdot|x)_{}\right\|_{\TV} \notag 
		\\ \rm{s.t.}\;\;&  \sum_{y} \widetilde{W}_{}(y|x)= 1 \quad \forall x, \notag
		\\&   \widetilde{W}_{}(y|x)\ge 0 \quad \quad \;\;\, \forall x,y, \notag 
		\\&  \widetilde{W}_{}(y|x) \le \zeta(y)  \quad\;\, \forall x,y, \notag
		\\&  \sum_{y} \zeta(y)=M.  \notag
	\end{align}  
\end{lemma}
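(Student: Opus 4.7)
The plan is to establish equivalence of the programs~\eqref{classical-ns-program} and~\eqref{classical-ns-program-sym} by demonstrating two inclusions of feasible sets that preserve the objective: first, any non-signaling map $N$ yields a pair $(\widetilde{W},\zeta)$ feasible for the reduced program; second, any such pair $(\widetilde{W},\zeta)$ can be lifted back to a non-signaling map. Since both programs share the same objective in terms of $\widetilde{W}$, this suffices.

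For the forward direction, I will use a symmetrization argument. The key observation is that the objective $\widetilde{W}(y|x) = \sum_{i=1}^{M} N(i,y|x,i)$ is invariant under the diagonal action of the symmetric group $\mathfrak{S}_M$ that permutes inputs $j$ and outputs $i$ simultaneously, since $\sum_i N(\pi(i), y \mid x, \pi(i))$ merely relabels the summation index. Averaging $N$ over this action produces a non-signaling map $\bar N$ with unchanged objective, whose diagonal $\bar N(i,y|x,i)$ is independent of $i$ and whose marginal $\bar N_{Y|J}(y|j)$ is independent of $j$. Setting $\zeta(y) := M\,\bar N_{Y|J}(y|1)$ and noting that $\widetilde{W}(y|x) = M\,\bar N(1,y|x,1) \le M\,\bar N_{Y|J}(y|1) = \zeta(y)$ (since $\bar N(1,y|x,1)$ is one summand of $\sum_i \bar N(i,y|x,1)$), the constraints of~\eqref{classical-ns-program-sym} follow, with $\sum_y \zeta(y) = M$ obtained from $\sum_y \bar N_{Y|J}(y|1) = 1$.

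For the reverse direction, given $(\widetilde{W},\zeta)$ feasible for~\eqref{classical-ns-program-sym}, I construct explicitly
\[
N(i,y|x,j) = \begin{cases} \dfrac{\widetilde{W}(y|x)}{M}, & i=j,\\[4pt] \dfrac{\zeta(y) - \widetilde{W}(y|x)}{M(M-1)}, & i \ne j. \end{cases}
\]
Non-negativity follows on the diagonal from $\widetilde{W}\ge 0$ and off-diagonal from $\widetilde{W}(y|x)\le \zeta(y)$. A short computation shows $\sum_i N(i,y|x,j) = \zeta(y)/M$ for every $j$ (using $\sum_y \zeta(y)=M$ only for normalization), and $\sum_y N(i,y|x,j) = 1/M$ for every $i$ (using $\sum_y \widetilde W(y|x)=1$ on the diagonal and the same plus $\sum_y\zeta(y)=M$ off-diagonal), which verifies the two non-signaling conditions. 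Finally, $\sum_{i} N(i,y|x,i) = M\cdot \widetilde{W}(y|x)/M = \widetilde{W}(y|x)$ recovers the synthesized channel, so $N$ is feasible for~\eqref{classical-ns-program} with the same objective value.

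I do not anticipate a major obstacle; the argument is purely algebraic. The one point requiring care is matching the marginal inequality $\widetilde{W}(y|x)\le \zeta(y)$ to exactly the right marginal of the symmetrized $\bar N$ in the forward direction (one must verify that $\zeta$ is indeed independent of $j$ after symmetrization), and checking the two non-signaling constraints simultaneously in the reverse direction.
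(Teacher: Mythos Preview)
Your proof is correct and essentially matches the paper's. The reverse direction is identical, including the explicit formula for $N$. In the forward direction the paper is slightly more direct: it skips the symmetrization and simply sets $\zeta(y)=\sum_j N_{Y|J}(y|j)$ from the original $N$, obtaining $\zeta(y)=\sum_{i,j} N(i,y|x,j)\ge \sum_i N(i,y|x,i)=\widetilde W(y|x)$ by dropping the off-diagonal terms. Your symmetrized $\bar N$ yields exactly the same $\zeta$, so the extra averaging step is harmless but not needed.
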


\begin{proof}
Recall the non-signaling program of simulating a classical channel
\begin{align}
	\suc^{\rm{NS}} = \max_{N_{IY|XJ}} &1-\sup_{x\in \cX}\left\|\widetilde{W}(\cdot|x)_{}- W(\cdot|x)_{}\right\|_{\TV}\notag
	\\  \rm{s.t.}\;\;& \widetilde{W}(y|x)= \sum_{i=1}^M N_{IY|XJ}(i,y|x,i) \notag 
	\\&   N_{IY|XJ}(i,y|x,j)\ge 0 \quad \forall i,y,x,j,\notag
	\\&\sum_{i,y}    N_{IY|XJ}(i,y|x,j)=1  \quad \forall x,j,\notag 
	\\    &\sum_{i}    N_{IY|XJ}(i,y|x,j)=N_{Y|J}(y|j)  \quad \forall x, \notag
	\\    &\sum_{y}    N_{IY|XJ}(i,y|x,j)=N_{I|X}(i|x)  \quad  \forall j.\notag
\end{align}
Given a non-signaling super map 	$N_{IY|XJ}$, we define $\zeta$ as
\begin{align*}
	\forall y : \zeta(y)= \sum_{j}N_{Y|J}(y|j).
\end{align*}
We have by using the non-signaling properties of $N_{IY|XJ}$ that
\begin{align*}
	\forall x,y : \zeta(y)&= 
	\sum_{j}N_{Y|J}(y|j)
		\\&=\sum_{i,j}N_{IY|XJ}(i,y|x,j)
	\\	&\ge \sum_{i}N_{IY|XJ}(i,y|x,i)
	\\&= \widetilde{W}(y|x),
	\\	\sum_{y}\zeta(y)&= \sum_y\sum_{i,j}N_{IY|XJ}(i,y|x,i)
	\\&=\sum_{j}\sum_{i,y}N_{IY|XJ}(i,y|x,j)
	\\&= \sum_j 1= M.
\end{align*}
Conversely if $\zeta$ satisfies
\begin{align*}
	\widetilde{W}(y|x) &\le \zeta(y)  \quad\;\, \forall x,y, 
	\\  \sum_{y} \zeta(y)&=M,
\end{align*}
we define the non-signaling super map $N_{IY|XJ}$ as
\begin{align*}
	N_{IY|XJ}(i,y|x,j)=
\begin{cases}
	\frac{\widetilde{W}(y|x)}{M} 	& \text{if}\; i=j,
	\\ \frac{\zeta(y)-\widetilde{W}(y|x)}{M(M-1)}& \text{if}\; i\neq j.
\end{cases}
\end{align*}
We check the non-signaling properties
\begin{align*}
	N_{IY|XJ}(i,y|x,j)&\ge 0,
	\\\sum_{i} N_{IY|XJ}(i,y|x,j)&=\frac{\zeta(y)}{M},
	\\\sum_{y} N_{IY|XJ}(i,y|x,j)&=	\frac{1}{M},
	\\	\sum_{i,y} N_{IY|XJ}(i,y|x,j)&=\sum_i \frac{1}{M}=1.
\end{align*}
Hence, the two programs are equal.
\end{proof}

Similarly, we prove the symmetry based reduction of the non-signaling program of simulating a quantum channel with classical communication. Note that this program is similar to the one with quantum communication~\eqref{ns-program-qq}.

\begin{lemma}\label{app:ns-program-qc-class}
	The non-signaling program of simulating a quantum channel with classical communication of size $N$ can be written as
	\begin{align}
		\suc_{C}^{\rm{NS}} =\max_{\widetilde{\cW}_{A\rightarrow B}, V_B} & 1-\left\| \cW_{A\rightarrow B}- \widetilde{\cW}_{A\rightarrow B}  \right\|_{\diamond} \notag 
		\\\rm{s.t.}\quad &  \widetilde{\cW}_{A\rightarrow B} \quad \text{quantum  channel,}   \label{ns-program-qq-cl}
		\\&   J_{\widetilde{W}} \mle  \dI_{A'} \otimes V_B, \quad  \notag
		\\&    \tr{V_B}=N.  \notag
	\end{align}
\end{lemma}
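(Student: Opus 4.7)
The plan is to mirror Lemma~\ref{app:ns-program-cc} in the setting of quantum input/output systems $A_i, B_o$ with classical communication registers $A_o, B_i$ of dimension $N$. Two directions of containment need to be established, both following the classical template closely.

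For the direction from the full non-signaling formulation to the reduced one, I would take a feasible non-signaling super-map $\Pi$, set $\widetilde{\cW}_{A_i \to B_o} = \Pi \circ \id^c_{A_o \to B_i}$ by composing with the classical identity, and define
\[
V_B = \sum_{j=1}^{N} \bra{j}_{B_i}\bigl(\ptr{A_i A_o}{J_\Pi}\bigr)\ket{j}_{B_i},
\]
the quantum analog of $\zeta(y) = \sum_j N_{Y|J}(y|j)$ from the classical case. The non-signaling constraint $\ptr{A_o}{J_\Pi} = \frac{\dI_{A_i}}{|A_i|} \otimes \ptr{A_o A_i}{J_\Pi}$, together with positivity of the summands on $A_o B_i$ with indices $i \neq j$, should yield the operator inequality $J_{\widetilde{\cW}} \mle \dI_{A'} \otimes V_B$. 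The normalization $\tr{V_B} = N$ follows directly from the trace-preserving condition $\ptr{A_o B_o}{J_\Pi} = \dI_{A_i B_i}$.

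For the reverse direction, given any feasible $(\widetilde{\cW}, V_B)$, I would build a Choi matrix for $\Pi$ in direct analogy with the explicit classical construction:
\[
J_\Pi = \tfrac{1}{N} \sum_{i} \proj{i}_{A_o} \otimes \proj{i}_{B_i} \otimes J_{\widetilde{\cW}} + \tfrac{1}{N(N-1)} \sum_{i \neq j} \proj{i}_{A_o} \otimes \proj{j}_{B_i} \otimes (\dI_{A'} \otimes V_B - J_{\widetilde{\cW}}),
\]
under the identifications $A' \simeq A_i$ and $B \simeq B_o$. Positivity follows from $J_{\widetilde{\cW}} \mge 0$ and the assumption $\dI_{A'} \otimes V_B - J_{\widetilde{\cW}} \mge 0$. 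I would then verify the four non-signaling conditions by marginalizing $J_\Pi$ in turn, using $\tr{V_B} = N$ and the trace preservation of $\widetilde{\cW}$, and check that $\Pi \circ \id^c_{A_o \to B_i}$ recovers $\widetilde{\cW}$ so that the objectives match.

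The main obstacle is the bookkeeping between the quantum systems $A_i, B_o$ and the classical registers $A_o, B_i$: one must confirm that the diagonal/off-diagonal decomposition above satisfies all non-signaling marginal conditions simultaneously. Once the correct analog of the classical symmetric construction is in place, the verification is a mechanical algebraic computation that parallels Lemma~\ref{app:ns-program-cc}.
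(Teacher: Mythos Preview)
Your plan is sound and your converse construction of $J_\Pi$ is exactly the one the paper uses. For the forward direction, however, the paper proceeds differently: it first averages $J_\Pi$ over the permutation action $P^\sigma_{A_o}\otimes P^\sigma_{B_i}$ of $\fS_N$ (which preserves both the non-signaling constraints and the composition with $\id_c$), writes the resulting symmetric Choi matrix in block form indexed by the coincidence pattern of indices on $A_oB_i$, and then reads off $J_{\widetilde\cW}=NH$ and $\dI_{A_i}\otimes V=N(H+(N-1)K)$ with $K\mge 0$. Your direct argument---drop the blocks $\bra{ij}_{A_oB_i}J_\Pi\ket{ij}_{A_oB_i}$ with $i\neq j$ by positivity and invoke $(A\nrightarrow B)$ on the full sum $\sum_{i,j}$---yields the same inequality without any symmetrization, and is more elementary. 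The symmetrization buys a cleaner structural picture of the optimal $\Pi$, but for proving the lemma your shortcut suffices.

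One small correction: with your definition $V_B=\sum_j\bra{j}_{B_i}\ptr{A_iA_o}{J_\Pi}\ket{j}_{B_i}$, the non-signaling constraint gives
\[
\sum_{i,j}\bra{ij}_{A_oB_i}J_\Pi\ket{ij}_{A_oB_i}=\sum_j\bra{j}_{B_i}\ptr{A_o}{J_\Pi}\ket{j}_{B_i}=\tfrac{\dI_{A_i}}{|A_i|}\otimes V_B,
\]
not $\dI_{A_i}\otimes V_B$, and correspondingly $\tr{V_B}=\tr{J_\Pi}=|A_i|\,N$ rather than $N$. You need an extra factor of $|A_i|^{-1}$ in your definition of $V_B$ (just as the paper's $V$ carries a factor $1/|A_i|$); after that the forward direction goes through exactly as you outline.
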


\begin{proof}
The general form of the non-signaling program of simulating a quantum channel $\cW$ using a quantum channel $\cN$ is~\cite{fang2019quantum}
\begin{align}\label{ns-program-general-form}
	\suc_{\cN}^{\rm{NS}} =\max_{\widetilde{\cW}_{A\rightarrow B }} & 1-\left\| \cW_{A\rightarrow B}- \widetilde{\cW}_{A\rightarrow B}  \right\|_{\diamond}  \notag
	\\\rm{s.t.}\quad &  \widetilde{\cW}_{A\rightarrow B} \quad \text{quantum  channel,}\notag  
	\\&   J_{\widetilde{\cW}}= \ptr{A_oB_i}{(J_{\cN}^\top \otimes \dI_{A_iB_o})\cdot J_\Pi },\notag 
	\\&  J_\Pi \ge 0, \quad \ptr{A_oB_o}{J_\Pi}= \dI_{A_iB_i},
	\\&  \ptr{A_o}{J_\Pi}= \frac{\dI_{A_i}}{|A_i|} \otimes \ptr{A_oA_i}{J_\Pi},\notag 
	\\&  \ptr{B_o}{J_\Pi}= \frac{\dI_{B_i}}{|B_i|} \otimes \ptr{B_iB_o}{J_\Pi}. \notag 
\end{align}
The setting of simulation using classical communication corresponds to $\cN=\id_c(\rho)= \sum_{x=1}^N \bra{x}\rho\ket{x} \proj{x} $. Using the symmetry of $J_{\id_c}$, we can conjugate the Choi matrix $J_\Pi$ with the unitaries for $\sigma \in \fS_N$:
\begin{align*}
	P^\sigma_{A_o}\otimes P^\sigma_{B_i}= \sum_{x=1}^N\ket{x}\bra{\sigma(x)}_{A_o} \otimes \sum_{x=1}^N\ket{x}\bra{\sigma(x)}_{B_i}.
\end{align*}
If $J_\Pi$ is an  optimal solution of the program \eqref{ns-program-general-form}  then  for $\sigma\in \fS_N$, the   Choi matrix $\left( P^\sigma_{A_o}\otimes P^\sigma_{B_i}\right)\cdot  J_\Pi \cdot \left( P^{\sigma^{-1}}_{A_o}\otimes P^{\sigma^{-1}}_{B_i}\right)$ satisfies the constraints of the program \eqref{ns-program-general-form}  hence $\frac{1}{N!}\sum_{\sigma\in \fS_N} \left( P^\sigma_{A_o}\otimes P^\sigma_{B_i}\right)\cdot  J_\Pi \cdot \left( P^{\sigma^{-1}}_{A_o}\otimes P^{\sigma^{-1}}_{B_i}\right) $ is also an optimal solution. 
Hence, we can write
\begin{align*}
	J_\Pi &= \sum_{x} \ket{x}\bra{x}_{A_o }  \otimes \ket{x}\bra{x}_{ B_i} \otimes  \rho^{0,0,0,0}_{A_iB_o} 
	\\&\quad + \sum_{x\neq y} \ket{x}\bra{x}_{A_o }  \otimes \ket{y}\bra{y}_{ B_i} \otimes  \rho^{1,1,0,0}_{A_iB_o}  
	\\&\quad+\qquad \cdots 
	\\&\quad+ \sum_{x\neq y\neq z \neq t} \ket{x}\bra{y}_{A_o }  \otimes \ket{z}\bra{t}_{ B_i} \otimes  \rho^{3,2,1,0}_{A_iB_o}.
\end{align*}
Denote by $H_{A_iB_0}= \rho_{A_iB_0}^{0,0,0,0}$ and $K_{A_iB_0}= \rho_{A_iB_o}^{1,1,0,0}$ then we have
\begin{align*}
	J_{\widetilde{\cW}}&= \sum_x \bra{xx}\otimes \dI_{A_iB_o} \; J_\Pi \;\ket{xx}\otimes \dI_{A_iB_o}   
	= N \rho_{A_iB_0}^{0,0,0,0}=NH_{A_iB_0} . 
\end{align*}
Moreover, the constraint $\ptr{A_o}{J_\Pi}= \frac{\dI_{A_i}}{|A_i|} \otimes \ptr{A_oA_i}{J_\Pi}$ implies
\begin{align*}
	&  N\left(\rho^{0,0,0,0}_{A_iB_o}+ (N-1)\rho^{1,1,0,0}_{A_iB_o}\right)
	=\frac{\dI_{A_i}}{|A_i|} \otimes N\left( \rho^{0,0,0,0}_{B_o}+ (N-1)\rho^{1,1,0,0}_{B_o}\right).       
\end{align*}
Denoting
\begin{align*}
	V&= \frac{N}{|A_i|} \left( \rho^{0,0,0,0}_{B_o}+ (N-1)\rho^{1,1,0,0}_{B_o}\right)
	= \frac{N}{|A_i|} \left( H_{B_o}+ (N-1)K_{B_o}\right),
\end{align*}
we have 
\begin{align*}
	\dI_{A_i}\otimes V&= \dI_{A_i} \otimes \frac{N}{|A_i|}\left( \rho^{0,0,0,0}_{B_o}+ (N-1)\rho^{1,1,0,0}_{B_o}\right)
	=N\left(\rho^{0,0,0,0}_{A_iB_o}+ (N-1)\rho^{1,1,0,0}_{A_iB_o}\right).
\end{align*}
On the other hand, we have $J_\Pi\mge 0$ so $K\mge 0$, and hence 
\begin{align*}
	J_{\widetilde{\cW}} &= NH_{A_iB_0} \mle N(H_{A_iB_0}+ (N-1) K_{A_iB_0} ).
\end{align*}
Since $H_{A_iB_0}=\rho^{0,0,0,0}_{A_iB_o}$ and $K_{A_iB_0}=\rho^{1,1,0,0}_{A_iB_o}$ it follows that
\begin{align*}
	J_{\widetilde{\cW}} &\mle N\left(\rho^{0,0,0,0}_{A_iB_0}+ (N-1) \rho^{1,1,0,0}_{A_iB_0} \right)
	= \dI_{A_i}\otimes V.
\end{align*}
Moreover, we have
\begin{align*}
	|A_i|\cdot |B_i|&= \tr{\dI_{A_iB_i}}=\tr{J_\Pi}
			=N\tr{H}+N(N-1)\tr{K}= |A_i| \cdot \tr{V}
\end{align*}
and consequently
\begin{align*}
	\tr{V}= |B_i|=N.
\end{align*}
Therefore, we find $\suc_{\id_c}^{\rm{NS}}\le \rm{Opt} \;\eqref{ns-program-qq-cl}$. 
		
Conversely, given $V_{B_o}$ such that $  J_{\widetilde{\cW}}\mle\dI_{A_i}\otimes V_{B_o}$ and $\tr{V_{B_o}}=N$,   we can  choose the Choi matrix of the non-signaling strategy as
\begin{align*}
	J_\Pi=& \frac{1}{N}\sum_{x} \proj{x}_{A_0}\otimes \proj{x}_{B_i}\otimes J_{\widetilde{\cW}}
	+& \frac{1}{N(N-1)}\sum_{x\neq z} \proj{x}_{A_0}\otimes \proj{z}_{B_i}\otimes \left(\dI_{A_i}\otimes V_{B_o} - J_{\widetilde{\cW}} \right).
\end{align*}
Both $\Pi$ and $\widetilde{\cW}$ satisfy all the constraints of the program \eqref{ns-program-general-form} for $\cN= \id_c$, hence $\rm{Opt} \;\eqref{ns-program-qq-cl}\le \suc_{\id_c}^{\rm{NS}}$, and finally
\begin{align*}
	\rm{Opt} \;\eqref{ns-program-qq-cl}= \suc_{\id_c}^{\rm{NS}}.
\end{align*}
\end{proof}


\section{One shot SR, EA and NS  success probabilities}\label{app: one shot SR vs EA}

Here we follow \cite{Berta2016Aug} and show that the SR, EA and NS success probabilities are in general different (see the discussion at the end of Subsection~\ref{sec:cl-ns}). Consider the channel \cite{cubitt2011zero,Berta2016Aug}, which is a special type of the channels~\eqref{eq:example-channel} we used to prove the tightness of our rounding results:
\begin{align*}
W : \binom{4}{2}\rightarrow \{1,2, 3, 4\}, \; W(y|x)=\frac{1}{2}\;\mathbf{1}\{y \in x\}.
\end{align*}
We have from Lemma~\ref{app:optimal} $\suc^{\rm{NS}}(W, 2)=1$ and $\suc^{\rm{SR}}(W, 2)\le \frac{5}{6}$. We use the same techniques of \cite{Berta2016Aug} to show that 
\[\suc^{\rm{EA}}(W, 2)\gtrsim 0.908. \]
Define the shared quantum state
\[\ket{\psi}= \frac{1}{2}\sum_{i=1}^4 \ket{i}\otimes \ket{i}. \]
Upon receiving the message $m\in\{1,2\}$, the decoder performs a measurement  with  the computational basis if $m=1$ and otherwise in the  orthonormal basis corresponding to $U$ with 
\[U=\frac{1}{\sqrt{3}}\begin{pmatrix}
   0&-1&-1&1\\
   1&0&1&1 \\
   -1&1&0&1\\
   -1&-1&1&0
\end{pmatrix}. \]
In other words, we set $D(y|1)= \proj{y}$ and $D(y|2)=U\proj{y}U^\dagger$ for $y\in \{1,2,3,4\}$. For an input $x\in \binom{4}{2}$, the encoder performs the measurement $\{E(m|x)\}_{m\in \{0,1\}}$ on the first system of the shared state. The EA success probability satisfies
\begin{align*}
    \suc^{\rm{EA}}(W, 2)&\ge \inf_{x\in \binom{4}{2}} \sum_{y} \min\left(\widetilde{W}^{\rm{EA}}(y|x),  \frac{1}{2}\;\mathbf{1}\{y \in x\}\right)
    \\&= \inf_{x\in \binom{4}{2}} \sum_{y\in x} \min\left(\widetilde{W}^{\rm{EA}}(y|x),  \frac{1}{2}\right).
\end{align*}
By the definition and description  of the entangled  strategy above, we have that
\begin{align*}
    \widetilde{W}^{\rm{EA}}(y|x) &= \sum_{m=1}^2\bra{\psi}E(m|x)\otimes D(y|m)\ket{\psi} 
    \\&= \bra{\psi}E(1|x)\otimes D(y|1)\ket{\psi} + \bra{\psi}[\dI-E(1|x)]\otimes D(y|2)\ket{\psi} 
    \\&= \bra{\psi}E(1|x)\otimes (\proj{y}-U\proj{y}U^\dagger)\ket{\psi}  + \bra{\psi}\dI\otimes U\proj{y}U^\dagger\ket{\psi} 
       \\&=\frac{1}{4} \tr{E(1|x)^\top\cdot (\proj{y}-U\proj{y}U^\dagger)} + \frac{1}{4}\le \frac{1}{2}
\end{align*}
using the fact that $0\mle E(1|x)\mle \dI$. Hence, we find
\begin{align*}
    \suc^{\rm{EA}}(W, 2)&\ge \inf_{x\in \binom{4}{2}} \sum_{y\in x} \min\left(\widetilde{W}^{\rm{EA}}(y|x),  \frac{1}{2}\right)
    \\&= \inf_{x\in \binom{4}{2}} \sum_{y\in x}\frac{1}{4} \tr{E(1|x)^\top\cdot (\proj{y}-U\proj{y}U^\dagger)} + \frac{1}{4}
    \\&= \frac{1}{2}+ \frac{1}{4}\inf_{x\in \binom{4}{2}} \tr{E(1|x)^\top\cdot \left(\sum_{y\in x}\proj{y}-U\sum_{y\in x}\proj{y}U^\dagger\right)}
      \\&= \frac{1}{2}+ \frac{1}{4}\inf_{x\in \binom{4}{2}} \frac{1}{2} \left\|\sum_{y\in x}\proj{y}-U\sum_{y\in x}\proj{y}U^\dagger\right\|_1,
\end{align*}
where we maximize over $\{E(1|x)\}_x$ in the last step. Since  we have that $\left\|\sum_{y\in x}\proj{y}-U\sum_{y\in x}\proj{y}U^\dagger\right\|_1 \simeq 3.2660$ for all $x\in \binom{4}{2}$, we deduce that 
\begin{align*}
     \suc^{\rm{EA}}(W, 2)&\gtrsim 0.908 > \frac{5}{6}. 
\end{align*}
On the other hand, observe that we have for $\widetilde{W}^{\rm{EA}}(y|x) = \sum_{m=1}^2\bra{\psi}E(m|x)\otimes D(y|m)\ket{\psi} $:
\begin{align*}
     \suc^{\rm{EA}}(W, 2) &= \sup_{\ket{\psi}, E, D } \inf_{x\in \binom{4}{2}} \sum_{y\in x} \min\left(\widetilde{W}^{\rm{EA}}(y|x),  \frac{1}{2}\right)
     \\&\le \sup_{\ket{\psi}, E, D } \frac{1}{\binom{4}{2}}\sum_{x\in \binom{4}{2}} \sum_{y\in x} \widetilde{W}^{\rm{EA}}(y|x)
     \\&= \sup_{\ket{\psi}, E, D } \frac{1}{2}\sum_{m=1}^2  \sum_{x\in \binom{4}{2}} \sum_{y\in [4]}   \frac{1}{3}\bid\{y\in x\} \bra{\psi}E(m|x)\otimes D(y|m)\ket{\psi}. 
\end{align*}
The latter expression can be identified as the optimal entanglement assisted coding success probability for the channel $N(x|y)= \frac{1}{3}\bid\{x \ni y\}$ where the encoder is represented by $D$ and the decoder is represented by $E$. In \cite{Berta2016Aug}, it is shown that this optimal success probability can be upper bounded  by approximately $0.908$ and thus 
\begin{align*}
      \suc^{\rm{EA}}(W, 2) &\lesssim 0.908. 
\end{align*}
Finally, we find that
\begin{align*}
    \suc^{\rm{SR}}(W, 2) \lneqq \suc^{\rm{EA}}(W, 2) \lneqq  \suc^{\rm{NS}}(W, 2). 
\end{align*}


\section{Lemmas}

\begin{lemma}[\cite{anshu2017quantum}, Lemma 2.1]\label{lem:cvx-split}
    Let $\rho_{PQ} \in\mathrm{D}(PQ)$ and $\sigma_Q \in \mathrm{D}(Q)$ be quantum states such that $\mathrm{supp}(\rho_Q)\subset \mathrm{supp}(\sigma_Q)$. Let $ k = \inf\{\lambda\in \mathbb{R} : \lambda \rho_P \otimes \sigma_Q \mge \rho_{PQ}\}$. Define the following state 
    \begin{align*}
        \tau_{PQ_1Q_2\cdots Q_n} = \frac{1}{n}\sum_{j=1}^n \rho_{PQ_j} \otimes \sigma_{Q_1}\otimes \sigma_{Q_2}\otimes \cdots \otimes  \sigma_{Q_{j-1}}\otimes \sigma_{Q_{j+1}}\otimes \cdots \otimes \sigma_{Q_n}
    \end{align*}
    on $(n+1)$ registers $P, Q_1, Q_2, \dots, Q_n$, where $\forall j \in [n] : \rho_{PQ_j} = \rho_{PQ}$ and $\sigma_{Q_j} = \sigma_Q$. Then, 
    \begin{align*}
        D(\tau_{PQ_1Q_2\cdots Q_n}\| \tau_P \otimes \sigma_{Q_1}\otimes \sigma_{Q_2}\otimes \cdots  \otimes \sigma_{Q_n}) &\le \log\bigg(1+\frac{k}{n} \bigg),
        \\ F(\tau_{PQ_1Q_2\cdots Q_n}\| \tau_P \otimes \sigma_{Q_1}\otimes \sigma_{Q_2}\otimes \cdots  \otimes \sigma_{Q_n})  &\ge \frac{1}{1+\frac{k}{n}}.
    \end{align*}
\end{lemma}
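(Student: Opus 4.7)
The plan is to reduce both inequalities to a single Petz R\'enyi-$2$ trace estimate. Write $\omega=\tau_P\otimes\sigma_Q^{\otimes n}=\rho_P\otimes\sigma_Q^{\otimes n}$, where $\tau_P=\rho_P$ because every summand in $\tau$ has $\rho_P$ on the $P$-register. By monotonicity of the Petz R\'enyi divergence in its order one has $D(\tau\|\omega)\le \log\tr{\tau^2\omega^{-1}}$, and by monotonicity of the sandwiched family together with the identity $\tilde D_{1/2}(\rho\|\sigma)=-\log F(\rho,\sigma)$ one has $-\log F(\tau,\omega)\le D(\tau\|\omega)$. Hence both claimed bounds would follow from the single operator-trace estimate $\tr{\tau^2\omega^{-1}}\le 1+k/n$.

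Expand
\[
\tr{\tau^2\omega^{-1}}=\frac{1}{n^2}\sum_{j,\ell=1}^n T_{j\ell},\qquad T_{j\ell}=\tr{\bigl(\rho_{PQ_j}\otimes\sigma_{Q_{[n]\setminus j}}\bigr)\bigl(\rho_{PQ_\ell}\otimes\sigma_{Q_{[n]\setminus \ell}}\bigr)\omega^{-1}},
\]
and split into the diagonal and off-diagonal parts. For the diagonal, the $n-1$ spectator $Q$-registers each contribute $\tr{\sigma^2\sigma^{-1}}=1$, so $T_{jj}=\tr{\rho_{PQ}^{2}\,(\rho_P\otimes\sigma_Q)^{-1}}$. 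Setting $A:=(\rho_P\otimes\sigma_Q)^{-1/2}\rho_{PQ}(\rho_P\otimes\sigma_Q)^{-1/2}$, the hypothesis on $k$ gives $0\mle A\mle kI$, so $A^{2}\mle kA$ (standard, since $A(kI-A)\mge 0$); conjugating back by $(\rho_P\otimes\sigma_Q)^{1/2}$ and taking trace yields $T_{jj}\le k\,\tr{\rho_{PQ}}=k$.

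For the off-diagonal terms $(j\ne\ell)$ comes the critical cancellation. The factor $\sigma_{Q_j}$ sitting inside the $\ell$-th piece cancels against the $\sigma_{Q_j}^{-1}$ inside $\omega^{-1}$, and similarly on $Q_\ell$. What remains on the $Q_\ell$-register is $\sigma_{Q_\ell}\rho_{PQ_\ell}\sigma_{Q_\ell}^{-1}$, whose partial trace over $Q_\ell$ equals $\rho_P$, because the partial trace over $Q_\ell$ is invariant under conjugation by an operator acting solely on $Q_\ell$: $\ptr{Q_\ell}{\sigma_{Q_\ell}\rho_{PQ_\ell}\sigma_{Q_\ell}^{-1}}=\ptr{Q_\ell}{\rho_{PQ_\ell}}=\rho_P$. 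An identical reduction on the $Q_j$-register, together with the spectator factors $1^{n-2}$, leaves $\tr{\rho_P\cdot\rho_P\cdot\rho_P^{-1}}=1$, so $T_{j\ell}=1$ for all $j\ne\ell$. Summing,
\[
\tr{\tau^2\omega^{-1}}\le \frac{nk+n(n-1)}{n^{2}}=1+\frac{k-1}{n}\le 1+\frac{k}{n},
\]
which completes the argument.

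The main obstacle is the non-commutative bookkeeping in the off-diagonal step: $\rho_{PQ}$ acts non-trivially on $P$, which is shared between the two pieces indexed by $j$ and $\ell$, so one must carefully verify that the contractions on $Q_j$ and $Q_\ell$ truly reduce each piece to its $\rho_P$-marginal before the $P$-sector multiplications occur. The remaining ingredients (the diagonal bound, the R\'enyi monotonicities, and the scalar arithmetic) are textbook.
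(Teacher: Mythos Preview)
The paper does not prove this lemma; it is merely stated with a citation to \cite{anshu2017quantum}. Your argument via the Petz R\'enyi-$2$ (collision) divergence is correct and is essentially the standard proof of the convex-split lemma: the diagonal estimate $T_{jj}\le k$ via $A^{2}\mle kA$ and the two R\'enyi monotonicities $\tilde D_{1/2}\le D\le D_2$ are exactly right.

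The only place your write-up is slightly informal is the off-diagonal reduction, and you correctly flag this yourself. It can be made fully rigorous in the order opposite to the one you sketch: since $\sigma_{Q_j}\sigma_{Q_j}^{-1}=I_{Q_j}$ sits adjacently in $X_2X_3$, contract $Q_j$ first (the identity on $Q_j$ lets you pull $\rho_{PQ_j}$ through the partial trace to its marginal $\rho_P$), leaving $\tr_{PQ_\ell}\big[(\rho_P\otimes\sigma_{Q_\ell})\,\rho_{PQ_\ell}\,(\rho_P^{-1}\otimes\sigma_{Q_\ell}^{-1})\big]$; cyclicity then collapses this to $\tr{\rho_{PQ_\ell}}=1$. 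This sidesteps the shared-$P$ ordering concern you raised. In sum, the proposal is sound.
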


\begin{lemma}[Fuchs–van de Graaf inequalities \cite{fuchs1999cryptographic}]\label{lem:FdG} 
    Let $\rho \in \mathrm{D}(A)$ and $\sigma \in \mathrm{D}(A)$ be quantum states. We have 
    \begin{align*}
      1-\sqrt{F(\rho, \sigma)} &\le \|\rho-\sigma\|_{\mathrm{tr}} \le \sqrt{1-F(\rho, \sigma)}.
    \end{align*}
\end{lemma}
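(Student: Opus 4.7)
The plan is to establish the two inequalities separately, since they require different techniques. For the upper bound $\|\rho-\sigma\|_{\mathrm{tr}}\le\sqrt{1-F(\rho,\sigma)}$, I would go through purifications. By Uhlmann's theorem, there exist purifications $|\phi\rangle_{AE},|\psi\rangle_{AE}$ of $\rho,\sigma$ on an enlarged system with $|\langle\phi|\psi\rangle|^2 = F(\rho,\sigma)$. Since the trace distance is contractive under the partial trace $\mathrm{Tr}_E$, I get $\|\rho-\sigma\|_{\mathrm{tr}}\le \||\phi\rangle\langle\phi|-|\psi\rangle\langle\psi|\|_{\mathrm{tr}}$. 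For rank-one differences, a direct computation in the two-dimensional subspace spanned by $|\phi\rangle,|\psi\rangle$ shows that the nonzero eigenvalues of $|\phi\rangle\langle\phi|-|\psi\rangle\langle\psi|$ are $\pm\sqrt{1-|\langle\phi|\psi\rangle|^2}$, so the half one-norm equals $\sqrt{1-F(\rho,\sigma)}$.

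For the lower bound $1-\sqrt{F(\rho,\sigma)}\le\|\rho-\sigma\|_{\mathrm{tr}}$, I would invoke the Fuchs--Caves variational characterization: there exists a POVM $\{E_m\}$ such that $\sqrt{F(\rho,\sigma)}=\sum_m\sqrt{p_mq_m}$, where $p_m=\mathrm{Tr}[E_m\rho]$ and $q_m=\mathrm{Tr}[E_m\sigma]$ are the outcome probabilities. The quantum trace distance dominates the classical total variation of any measurement outcomes, so $\|\rho-\sigma\|_{\mathrm{tr}}\ge \tfrac{1}{2}\sum_m|p_m-q_m|$. It then suffices to prove the elementary identity
\begin{equation*}
1-\sum_m\sqrt{p_mq_m}=\tfrac{1}{2}\sum_m(\sqrt{p_m}-\sqrt{q_m})^2,
\end{equation*}
together with the bound $(\sqrt{p_m}-\sqrt{q_m})^2\le|\sqrt{p_m}-\sqrt{q_m}|(\sqrt{p_m}+\sqrt{q_m})=|p_m-q_m|$, which follows from $|\sqrt{p_m}-\sqrt{q_m}|\le\sqrt{p_m}+\sqrt{q_m}$ (nonnegativity). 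Summing gives $1-\sqrt{F(\rho,\sigma)}\le \tfrac{1}{2}\sum_m|p_m-q_m|\le\|\rho-\sigma\|_{\mathrm{tr}}$.

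The only non-routine ingredient is Fuchs--Caves (equivalently, the attainment of the minimum over POVMs in $\sqrt{F(\rho,\sigma)}=\min_{\{E_m\}}\sum_m\sqrt{p_mq_m}$), which I would simply cite. I expect the main conceptual step to be organizing the lower bound around an \emph{equality}-achieving POVM rather than the optimal measurement for trace distance: using the latter one can only bound $\sqrt{F(\rho,\sigma)}$ from above via the classical Bhattacharyya coefficient, which goes the wrong way. Using Fuchs--Caves turns the monotonicity inequality into an equality at the chosen POVM, after which the remaining step is the trivial algebraic comparison of $(\sqrt{p_m}-\sqrt{q_m})^2$ with $|p_m-q_m|$.
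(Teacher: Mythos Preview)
Your proof is correct and follows the standard route (Uhlmann plus monotonicity for the upper bound; the Fuchs--Caves optimal measurement plus the elementary inequality $(\sqrt{p_m}-\sqrt{q_m})^2\le|p_m-q_m|$ for the lower bound). Note, however, that the paper does not actually prove this lemma: it is stated in the appendix with a citation to \cite{fuchs1999cryptographic} and used as a black box, so there is no in-paper argument to compare against.
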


\begin{lemma}[\cite{Tomamichel2016,Ramakrishnan2023Mar}]\label{lem:sinus}
Let $\rho \in \mathrm{D}(A)$, $\sigma \in \mathrm{D}(A)$ and $\tau  \in \mathrm{D}(A)$ be quantum states such that $F(\rho, \sigma)+ F(\sigma, \tau)\ge 1$. We have that 
    \begin{align*}
    \sqrt{1-F(\rho, \tau)}\le \sqrt{F(\rho, \sigma)}\sqrt{1-F(\sigma, \tau)}+\sqrt{1-F(\rho, \sigma)}\sqrt{F(\sigma, \tau)}.
\end{align*}
\end{lemma}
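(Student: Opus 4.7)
The plan is to pass through the Bures angle $A(\rho,\sigma):=\arccos\sqrt{F(\rho,\sigma)}$, whose values lie in $[0,\pi/2]$. The purified distance $P(\rho,\sigma)=\sqrt{1-F(\rho,\sigma)}=\sin A(\rho,\sigma)$ is a metric on quantum states (this is classical, following from Uhlmann's theorem and the Euclidean triangle inequality for inner products of the purifications), and in fact the stronger statement that the Bures angle $A(\cdot,\cdot)$ itself satisfies the triangle inequality is also standard (see Tomamichel's monograph). I will take this as given.

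Set $\alpha=A(\rho,\sigma)$, $\beta=A(\sigma,\tau)$, $\gamma=A(\rho,\tau)$, all in $[0,\pi/2]$. The first step is to translate the hypothesis $F(\rho,\sigma)+F(\sigma,\tau)\ge 1$ into an angle statement: it reads $\cos^2\alpha+\cos^2\beta\ge 1$, equivalently $\cos^2\alpha\ge\sin^2\beta$. Since $\alpha\in[0,\pi/2]$ and $\pi/2-\beta\in[0,\pi/2]$ and cosine is monotone decreasing on this interval, this is equivalent to $\alpha\le\pi/2-\beta$, i.e.
\begin{equation*}
\alpha+\beta\le\frac{\pi}{2}.
\end{equation*}

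The second step invokes the triangle inequality for the Bures angle to get $\gamma\le\alpha+\beta\le\pi/2$. Since sine is monotone increasing on $[0,\pi/2]$, this yields
\begin{equation*}
\sin\gamma\le \sin(\alpha+\beta)=\sin\alpha\cos\beta+\cos\alpha\sin\beta.
\end{equation*}
Unpacking the definitions gives $\sin\gamma=\sqrt{1-F(\rho,\tau)}$, $\cos\alpha=\sqrt{F(\rho,\sigma)}$, $\sin\alpha=\sqrt{1-F(\rho,\sigma)}$, and analogously for $\beta$, which is exactly the claimed inequality. The only nontrivial ingredient is the triangle inequality for $A$; the rest is bookkeeping with the sine addition formula and the reduction of the hypothesis to $\alpha+\beta\le\pi/2$, which is precisely what prevents the right-hand side from being an overshoot of the true sine of $\alpha+\beta$ (it would otherwise fail, since $\sin$ decreases past $\pi/2$). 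So I expect the main conceptual point to be emphasizing why the hypothesis $F(\rho,\sigma)+F(\sigma,\tau)\ge 1$ is exactly the right condition for the sine expansion to be used as a sharp upper bound rather than being weaker than the naive triangle inequality $P(\rho,\tau)\le P(\rho,\sigma)+P(\sigma,\tau)$.
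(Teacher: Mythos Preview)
Your argument is correct. The paper does not supply its own proof of this lemma; it merely states the inequality and cites \cite{Tomamichel2016,Ramakrishnan2023Mar}. Your route via the Bures angle $A(\rho,\sigma)=\arccos\sqrt{F(\rho,\sigma)}$ is in fact the standard one used in those references: the metric property of $A$ gives $\gamma\le\alpha+\beta$, the hypothesis $F(\rho,\sigma)+F(\sigma,\tau)\ge1$ is exactly $\alpha+\beta\le\pi/2$, and monotonicity of sine on $[0,\pi/2]$ together with the addition formula finishes the job. Nothing is missing.
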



\end{document}